\newtheorem{theorem}{Theorem}[section]
\newtheorem{lemma}[theorem]{Lemma}
\newtheorem{corollary}[theorem]{Corollary}
\newcommand{\fprog}{F_{prog}}
\newcommand{\fack}{F_{ack}}
\newcommand{\cmpdell}[1]{\textcolor{blue}{{}}}
\newif\ifold
\newif\ifnew
\newif\iflong
\newcommand{\bcastq}[0]{bcastq}
\newcommand{\cR}[0]{{\cal R}}
\newcommand{\cM}[0]{{\cal M}}
\newcommand{\cC}[0]{{\cal C}}
\newcommand{\mm}[0]{m}
\newcommand{\BMMB}[0]{{\mbox{BMMB}}}
\newcommand{\FullOrShort}{full}
	  \newcommand{\fullOnly}[1]{#1}
	  \newcommand{\shortOnly}[1]{}
	  \newcommand{\fullOnly}[1]{}
	  \newcommand{\shortOnly}[1]{#1}
\begin{document}

\date{}
\title{Multi-Message Broadcast with\\ Abstract MAC Layers and Unreliable Links%
\thanks{Supported in a part by AFOSR FA9550-13-1-0042 and NSF grants Nos. CCF-0939370, CCF-1217506, and CCF-AF-0937274.}
}

\author{
  Mohsen Ghaffari\\
	\small MIT\\
  \small ghaffari@csail.mit.edu
  \and
	Erez Kantor\\
	\small MIT\\
  \small erezk@csail.mit.edu
	\and
	Nancy Lynch\\
	\small MIT\\
  \small lynch@csail.mit.edu
	\and
	Calvin Newport\\
	\small Georgetown University\\
  \small cnewport@cs.georgetown.edu
}

\maketitle

\maketitle

\begin{abstract}
We study the multi-message broadcast problem using abstract MAC layer models
of wireless networks. These models capture the key guarantees of existing MAC layers
while abstracting away low-level details such as signal propagation and contention.
We begin by studying upper and lower bounds for this problem in a {\em standard abstract
MAC layer model}---identifying an interesting dependence between the structure
of unreliable links and achievable time complexity. In more detail,
given a restriction that devices connected
directly by an unreliable link are not too far from each other in the reliable link topology, we can
(almost) match the efficiency of the reliable case.
For the related restriction, however, that two devices connected by an unreliable link are not
too far  from each other in geographic distance,
we prove a new lower bound that shows that this efficiency is impossible.
We then investigate how much extra power must be added to the model to enable
 a new order of magnitude of efficiency. In more detail, we consider an {\em enhanced abstract MAC layer
model} and present a new multi-message broadcast algorithm that (under certain natural assumptions)
solves the problem in this model
faster than any known solutions in an abstract MAC layer setting.
\end{abstract}

\shortOnly{
\category{F.2.2}{Analysis of Algorithms and Problem Complexity}{Non-numerical Algorithms and Problems}[Computations on Discrete Structures]
\category{G.2.2}{Discrete Mathematics}{Graph Theory}[Network Problems]


\keywords{wireless networks, abstract MAC layer, broadcast}
}
\section{Introduction}

Most existing work on distributed algorithms for wireless networks assumes low-level synchronous models
that require algorithms to deal directly with link-layer issues such as signal fading (e.g.,~\cite{Kesselheim:2010,jurdzinski:2013:random,daum:2013})
and contention (e.g.,~\cite{baryehuda:1987,gasieniec:2001,kowalski:2005,czumaj:2006}).
%
%
These low-level models are appropriate for answering {basic science} questions about the capabilities and limits
of wireless communication. We argue, however, that they are often {\em not} appropriate for designing and analyzing algorithms meant
for {real world deployment}, because: (1) they fail to capture the unpredictable reality of real radio signal propagation (which tends
not to follow simple collision or fading rules~\cite{newport:2007});
(2) they do not address issues like network co-existence (it is rarely
true that your algorithm is alone in using the wireless channel);
 and (3) they ignore the presence of general purpose MAC layers which are standard and hard to bypass in existing devices.

In~\cite{kuhn:2009,kuhn:2011abstract}, we introduced the {\em abstract MAC layer} approach as an alternative
to low-level models for studying wireless algorithms.
This approach moves the algorithm designer up the network stack by modeling the basic guarantees of most existing wireless MAC layers.
In doing so, it abstracts away
low level issues such as signal fading and contention,
instead capturing the impact of this behavior on higher layers with model uncertainty. 
Because abstract MAC layers are defined to maintain the basic guarantees of most standard wireless MAC layers,
algorithms developed in such models can be deployed on existing
devices while maintaining their formally proved properties.

In this paper, we study the basic communication primitive of {\em multi-message broadcast} (a subset of devices start with one or more messages
they need to disseminate to the whole network) in abstract MAC layer models that include unreliable links.
We produce new upper and lower bounds, and explore new model variants. Our results, summarized below and in Figure~\ref{fig:results},
provide new theoretical insight into the relationship between unreliability and efficiency,
 and identify practical algorithmic strategies.

\bigskip

\noindent {\bf Abstract MAC Layer Models.} Abstract MAC layer models provide devices with an acknowledged local broadcast
primitive that guarantees to deliver a message to a device's reliable neighbors (captured by a graph $G$)
and possibly some arbitrary
subset of additional unreliable neighbors (captured by a graph $G'$). At some point after the message is delivered, the sender receives
an acknowledgment.\footnote{The acknowledgment in this model
 describes the behavior of a standard MAC layer asking for the next message to broadcast
from the send queue; i.e., after a CSMA back-off protocol finishes with a given packet. 
It does not represent an acknowledgment explicitly sent from the receivers.} The performance of the model in a given execution is defined by two
constants: $\fack$, the maximum time for a given local broadcast to complete and be acknowledged,
and $\fprog$, the maximum time for a device to receive {\em some} message when at least one
nearby device is broadcasting.
We note that in both theory
and practice,
 $\fprog \ll \fack$.\footnote{From a theory perspective,
we note that
standard probabilistic strategies like {\em decay} (cycle through an exponentially decreasing series of broadcast probabilities),
when analyzed in graph-based, low-level wireless models,
offer $F_{prog}$ values that are polylogarthmic in the maximum possible contention,
while $F_{ack}$ values can be linear (or worse) in this same parameter (see~\cite{ghaffari:2012} for more discussion).
From a practical perspective, this gap is easily demonstrated.
Consider, for example, a star network topology centered on device $u$ where
all points in the star have a message to broadcast. If these nodes are using standard back-off style strategies,
$u$
will receive {\em some} message quickly. But regardless of how contention is managed,
there are some points in the star that will have to wait a long time (i.e., linear in the star size) for their turn.}

\bigskip

\noindent {\bf Results.}
In this paper, we consider the multi-message broadcast (MMB) problem. This problem distributes $k\geq 1$ messages to devices
at the beginning of an execution, where $k$ is not known in advance. It is considered solved once all messages are delivered to all nodes.
We begin by studying a natural MMB strategy called Basic Multi-Message Broadcast (BMMB)
in what we call the {\em standard abstract MAC layer} model: a basic model that captures the key guarantees
of existing MAC layers.
The BMMB algorithm implements an expected strategy for broadcast: on first receiving a message $m$, from the environment
or from another device, add it to your FIFO
queue of messages to broadcast; going forward, discard any additional copies of $m$ that you receive.
In previous work~\cite{kuhn:2011abstract}, we proved that BMMB solves the problem in $O(D\fprog + k\fack)$ time
 in the standard abstract MAC layer model
under the strong assumption that $G'=G$, i.e., there are no unreliable links, and $D$ is the diameter of $G$.
In the first part of this paper, we investigate the performance of strategy in the presence of unreliability.

\begin{figure*}[t]
\centering

\begin{tabular}{|c||c|c|c|}
\hline
Model/G'  Const. & $G' = G$ & $r$-Restricted & Grey Zone \\ \hline \hline
Standard & $O(D\fprog + k\fack)$~\cite{kuhn:2011abstract} & $O(D\fprog + r k \fack)$ & $\Theta((D+k)\fack)$  \\ \hline
Enhanced &  {\em same as grey zone} & {\em open} & $O((D + k\log{n} + \log^3{n})\fprog)$ \\ \hline
\end{tabular}
\label{fig:results}
\caption{\footnotesize  A summary of results categorized by model type and constraints assumed regarding $G'$.
With the exception of the $G'=G$ result for the standard model, the results
in this table are proved for the first time in this paper. We note that the grey zone result for the standard model summarizes
two separate results: an upper bound and matching lower bound.  These two results also hold for arbitrary $G'$.}
\vspace{-10pt}
\end{figure*}

We begin by considering the case where $G'$ is arbitrary; i.e., there are no constraints on the structure of unreliable links.
Under this pessimistic regime, we reanalyze BMMB,
proving a new guarantee of $O((D+k)\fack)$ time,
which is (potentially) much slower than what is possible when $G'=G$.
The presence of unreliable edges, it turns out,
breaks the careful induction at the core of the $G'=G$ result,
as they allow old messages to arrive unexpectedly from farther
away in the network at inopportune points in an execution.

Not satisfied with this slow-down,
we then consider the case of an $r$-{\em restricted} $G'$---a natural
constraint that only allows $G'$ edges between nodes within $r$ hops in $G$.
Under these constraints, we can now show that BMMB
solves the problem in $O(D\fprog + r\cdot k\cdot \fack)$ time,
which is close to the $G'=G$ case for small $r$.
This proof
discards the core strategy of the $G'=G$ case, which depends heavily on the lack of unreliable links,
and instead uses a new type of pipelining argument that carefully accounts for the possible message behavior
over $r$-restricted $G'$ links.

We conclude our investigation of BMMB
by studying
  the {\em grey zone} constraint~\cite{censor:2011,ghaffari:2013}:
  a natural geographic restriction on $G'$ that generalizes the unit disk graph model.
  Here we prove the perhaps surprising lower bound result
  that {\em every MMB algorithm}
  requires $\Omega((D+k)\fack)$ time, in the worst case,
  to solve the problem under this constraint.
  This result establishes the optimality of our analysis of BMMB under the grey zone
  constraint, as well as for arbitrary $G'$,
  and opens an intriguing gap between the grey zone and $r$-restricted assumptions.
  At the core of this lower bound is a careful scheduling strategy
 that synchronizes broadcasters in two parallel lines to a sufficient degree
 to allow their messages to cause mutual delay.

Having established the limits of MMB in the standard abstract MAC layer model,
we then explore the possibility of adding more power to the model to enable more efficient solutions.
In particular, we use the enhanced abstract MAC layer model of \cite{kuhn:2011abstract} which also allows nodes to abort
transmissions in progress and use knowledge of $\fprog$ and $\fack$.
Combining this model with the grey zone constraint on $G'$,
 we describe and analyze a new algorithm, which we call Fast Multi-Message Broadcast (FMMB),
that solves the MMB problem in $O((D\log{n} + k\log{n} + \log^3{n})\fprog)$ time (with high probability in the network size, $n$\footnote{We define high probability
to be at least $1-1/n$. We note that BMMB's guarantees are deterministic but that our lower bound works even for randomized solutions.})---avoiding an $\fack$ term altogether. This algorithm begins by building a maximal independent set (a subroutine of independent interest),
then efficiently gathers and spreads messages over the overlay network defined by these nodes.
We note that the assumptions that separate the enhanced from standard model were chosen in part
because they are feasible to implement using existing technology. 

\bigskip
\noindent {\bf Discussion.} From a theoretical perspective, the new upper and lower bounds proved in this
paper emphasize the perhaps surprising insight that the efficiency of message dissemination depends on the
{\em structure} of unreliability, not the {\em quantity.} We are able, for example, to solve MMB fast
with an $r$-restricted $G'$. This constraint allows for a large number of unreliable edges in every neighborhood,
and only forbids these edges from covering long distances in $G$. Our lower bound, on the other hand, demonstrates that even a small
number of unreliable edges is sufficient to slow down any MMB solution, so long as these edges are allowed to cover large
 distances in $G$.

  From a practical perspective,
  our efficient time bounds for  BMMB under the (reasonable) $r$-restricted assumption helps explain why
  straightforward flooding strategies tend to work well in real networks.
  In addition, our enhanced MAC layer results provide feedback to MAC layer designers, indicating
  that the ability to abort messages might prove crucial for enabling more efficient distributed protocols running on these layers.


\bigskip
\noindent {\bf Related Work.} The study of broadcast in wireless networks has a long line of history, dating back to 1985 work of Chalamatac and Kutten\cite{CK}, and has since received a vast amount of attention (see e.g., \cite{bar-yehuda:1992, alon:1991, KP, CR, KM93, KP-DC, GPX,KK, GHK13, Pelc08, DGKN10}). Most of this existing work deals directly with low-level issues such as managing contention on the shared medium.

The {\em abstract MAC layer} model was proposed in \cite{kuhn:2009,kuhn:2011abstract} as an alternative approach, which moves up the network
stack and abstracts away low level issues with model uncertainty. This model has since been used to study a variety of problems; e.g., ~\cite{khabbazian:2010, cornejo2009neighbor, cornejo2012reliable, khabbazian2011mac, Chung:2011, Chung:2010, Lynch:2012}.
Most relevant to this paper is the work of~\cite{kuhn:2009} and subsequently \cite{khabbazian:2010},
which study broadcast in various abstract MAC layer models, but under the assumption
of no unreliable edges.

A core property of the abstract MAC layer models studied in this paper, by contrast, is the presence
of unreliable links in addition to reliable links.
A lower level model also assuming these dual link types
 was introduced by Clementi et~al.~\cite{clementi:2004}, where it was called the {\em dynamic fault} model. 
 We independently reintroduced the model in~\cite{kuhn:2009} with the name {\em dual graph} model.
By now it is well-known that most problems that are simple in the absence of unreliability (when $G'=G$), become significantly harder in its presence (when $G'\neq G$); e.g., \cite{kuhn:2009, ghaffari:2012, ghaffari:2013, censor:2011}. For instance, in the dual graph model with an offline adaptive adversary, single-message broadcast require 
$\Omega(n)$ rounds, even in constant diameter graphs~\cite{kuhn:2009}. 
We emphasize, however, that this existing work on dual graphs focuses on low level models,
whereas this paper carries this behavior to a higher level of abstraction.

%



\section{Model and Problem}
\label{sec:model}

There is no single abstract MAC layer model, but instead many different such models
that all share the same strategy of abstracting standard wireless link layer behavior
and therefore preventing the algorithm from having to deal directly with low level wireless behavior.
Below we define the basics shared by these models, then specify
the two variants studied in this paper. We conclude by formalizing the multi-message broadcast problem.


\bigskip
\noindent {\bf Abstract MAC Layer Basics.}
Abstract MAC layer models typically define
the connectivity of a radio network with a pair of graphs, $G$ and $G'$,
where $G=(V,E)$, $G' = (V,E')$, and $E \subseteq E'$.
The $n$ vertices in $V$ correspond to the wireless devices (which we call {\em nodes} in this paper),
while the edges in $G$ and $G'$ describe
the communication topology.
At a high-level, edges in $E$ indicate reliable links over which the
model  can always deliver messages, while edges in $E'\setminus E$ indicate unreliable
links over which the model sometimes delivers messages and sometimes does not.

The model provides an acknowledged local broadcast primitive.
To simplify the definition of this primitive, assume
without loss of generality
that all local broadcast messages are unique.
When a node $u\in V$ {\em broadcasts} a message $m$, the model delivers the message to
all neighbors in $E$ and (perhaps) {\em some} neighbors in $E'\setminus E$.
It then returns an {\em acknowledgment} of $m$
to $u$ indicating the broadcast is complete.
These are the only message deliveries performed by the model.
%
We assume nodes are well-formed in the sense that
they always wait for the acknowledgment of their current message before initiating a new broadcast.

This model provides two timing bounds,
defined with respect to two positive constants, $\fack$ and $\fprog$ which are fixed for each execution.
The first is the {\em acknowledgment} bound, which guarantees that each broadcast will complete and be acknowledged
within $\fack$ time. The second is the {\em progress} bound, which guarantees the following slightly more complex condition:
fix some $(u,v)\in E$ and interval of length $\fprog$ throughout which $u$ is broadcasting a message $m$;
during this interval $v$ must receive some message (though not necessarily $m$).
%
The progress bound, in other words, bounds the time for a node to receive some message when at least one of its neighbors is broadcasting.
As mentioned in the introduction, in both theory and practice it is reasonable to assume that $\fprog$ is much smaller than $\fack$.
We emphasize that in abstract MAC layer models,
 the choice of which neighbors in $E' \setminus E$ receive a given message,
as well as the order of receive events, are determined non-deterministically by an arbitrary {\em message scheduler}.
The timing of these events is also determined non-deterministically by the scheduler, constrained only by the above
time bounds. 

We assume that nodes have unique ids.
We also assume that each node can
 differentiate between their neighbors in $E$ and $E' \setminus E$,
 an assumption justified by the standard practice in real networks of assessing link quality.
 %
For a given network definition $(G,G')$,
we use $D$ to describe the diameter of $G$,
and $d_G(u,v)$, for $u,v\in V$, to describe the shortest path distance between $u$ and $v$ in $G$.
We define $D'$ and $d_{G'}$ similarly, but for $G'$.
Finally, when proving lower bounds, we explicitly model randomness by passing each node at the beginning of the execution
sufficiently many random bits to resolve probabilistic choices.

\noindent {\bf The Standard Abstract MAC Layer.}
The standard abstract MAC layer  models nodes as event-driven automata.
It assumes that an environment abstraction fires a {\em wake-up} event at each node at the beginning of each execution.
The environment is also responsible for any events specific to the problem being solved. In multi-message broadcast,
for example, the environment provides the broadcast messages to nodes at the beginning of the execution.

\noindent {\bf The Enhanced Abstract MAC Layer.}
The enhanced abstract MAC layer model differs from the standard model in two
ways. First, it allows nodes access to time (formally, they can set timers that trigger events when they expire),
and assumes nodes know $\fack$ and $\fprog$.
Second, the model also provides nodes an {\em abort} interface that allows them to abort a broadcast
in progress.

\noindent {\bf Restrictions on $G'$.}
When studying a problem in an abstract MAC layer model,
it is often useful  to consider constraints on the graph $G'$.
In the original paper on these models~\cite{kuhn:2009},
for example, we considered the very strong constraint that $G'=G$.
In this paper, we consider three more general constraints on $G'$.

First, we say $G'$ is {\em arbitrary} to indicate that we place no restrictions on its definitions (other than the default constraint of $E \subset E'$).
Second,
we say $G'$ is
 {\em $r$-restricted}, for some $r \geq 1$,
if for every $(u,v)\in E'$, $d_G(u,v) \leq r$.
In studying this constraint, we sometimes use the notation $G^r$
to describe the graph with edges between every $u,v\in V$, $u\neq v$,
where $u$ and $v$ are within $r$ hops in $G$.
An $r$-restricted $G'$ is a subgraph of $G^r$.
Third, we say $G'$ is {\em grey zone restricted}
if (in addition to the general constraint of $E\subseteq E'$), the following is also true:
 we can embed the nodes in the Euclidean plane, giving each $v\in V$ a position $p(v) \in \mathbb{R}^2$
such that (1) For each pair of nodes $v, u \in V$, $(v, u) \in E$ if and only if $\left\|p(v)-p(u)\right\|_2 \leq 1$,
and (2) for each pair of nodes $v, u \in V$, if $(v, u) \in E'$, then $\left\|p(v)-p(u)\right\|_2 \leq c$,
where $c$ is a universal constant such that $c\geq 1$.
The range between $1$ and $c$, in other words, describes a {\em grey zone} in which
communication is uncertain. We emphasize that the second property described above only states that edges in $E'$ cannot be longer than $c$,
it does not require that every pair of nodes that have distance less than or equal to $c$ must be $G'$-neighbors. 

\noindent {\bf The Multi-Message Broadcast Problem.}
The multi-message broadcast (MMB) problem assumes the environment injects $k \geq 1$ messages into the network at the beginning of an 
execution,\footnote{A general
version of the MMB problem, in which the messages
arrive in an online manner, is studied in~\cite{kuhn:2011abstract} and elsewhere.}
perhaps providing multiple messages to the same node. We assume $k$ is not known in advance.
The problem is solved once every message $m$, starting at some node $u$,
reaches every node in $u$'s connected component in $G$.  To achieve strong upper bound we do not, in other words,
assume that $G$ is connected.
We treat messages as black boxes that cannot be combined; for example, we do not consider network coding solutions.
We also
assume that only a constant number of these messages can fit into a single local broadcast message.
In this paper, we consider both deterministic and randomized algorithms.
We require randomized solutions to solve the problem with high probability (w.h.p.), which
we define to be at least $1-1/n$.

\section{Multi-Message Broadcast with a Standard Abstract MAC Layer}
\label{sec:standard}

In this section we study multi-message broadcast in what we call the {\em standard abstract MAC layer model}.
%
As mentioned in the introduction,
 in previous work~\cite{kuhn:2009,kuhn:2011abstract}
we described the {\em Basic Multi-Message Broadcast} (BMMB) algorithm, which implements
the standard strategy of broadcasting each message only the first time you receive it.
In more detail, the BMMB protocol works as follows.
Every process $i$ maintains a FIFO queue and list of received messages.
When a process $i$ receives a message $m$ from the MAC layer
it checks whether it already received it.
If it has already received it, it discards the message. Otherwise, process $i$ adds $m$ to the back of its queue.
Process $i$ broadcasts the message at the head of its queue (if its queue is not empty) and waits for acknowledgment from the MAC layer.
When the MAC layer acknowledges the message,
$i$ removes it from the queue and moves on to the next message (if any).

In~\cite{kuhn:2011abstract}, we proved that BMMB solves the MMB problem in $O(D\fprog + k\fack)$ time
under the strong assumption that $G' = G$.
In the two subsections that follow, we study its behavior under more general definitions of $G'$.
We then prove a lower bound for all MMB solutions.

\subsection{The BMMB Algorithm for Arbitrary G'}
\label{sec:bmmb:arbitrary}

The natural next step in analyzing BMMB is considering its performance in our model when we assume an arbitrary $G'$.
It is easy to show, of course, that the algorithm always terminates in $O(Dk\fack)$ time, as
a message $m$, on arriving at a new hop, must wait for at most $O(k)$ messages to be broadcast before it too is broadcast.
Here we apply a more detailed pipeline argument to show that BMMB performs better than this basic bound in this difficult setting.

\begin{theorem}
 The BMMB algorithm solves the MMB problem in $O((D+k)\fack)$ time in the standard abstract MAC layer model for arbitrary $G'$.
\label{thm:bmmb:arbitrary}
\end{theorem}
\begin{proof}
For the sake of analysis, assume each node $u$ keeps
a $sent$ set to which it adds every message that it has successfully broadcast (i.e., broadcast and received an ack for).
Next, fix some execution and an arbitrary message $m$ from among the $k$ messages provided to BMMB to broadcast in this execution.
Let $u_m \in V$ be the node that is provided $m$ by the environment at the start of the execution.
For each node $v\in V$, let $d_v = d_G(v,u_m)$.
For each $\ell\in [1,k]$,
let $t_{\ell}(v) = d_v + \ell$.

Our strategy is to prove the following key claim:
{\em for each $\ell\in [1,k]$ and node $v$, after $t_{\ell}(v)\fack$ time, node $v$'s $sent$ set either: (1) contains
$m$; or (2) contains at least $\ell$ other messages.}
Once we prove this claim, it will then follow that after $t_k(v)\fack \leq (D + k)\fack$ time, $v$ has sent (and thus also received) $m$.
Applying this to all nodes and all $k$ messages then yields the theorem statement.
We prove our key claim using induction on $h = d_v + \ell$.
For the base case of $h=0$,
notice that $h=0$ implies $d_v=0$.
This, in turn, reduces the key claim
to a statement about the local queue of $v$ that follows directly from the definition of the algorithm.

For the inductive step,
consider some $v$ such that $d_v + \ell = h$.
To show the key claim holds for $t_{\ell}(v) =h$,
we leverage the inductive hypothesis for nearby nodes
and combinations of relevant values that sum to $h-1$.
First,
we note that
if $d_v=0$, then the base case argument once again applies.
Assume, therefore, that $d_v \geq 1$.
Next,
fix some $G$-neighbor $u$ of $v$ such that $d_u = d_v -1$.
By the induction hypothesis,
we know that after $s = t_{\ell-1}(v)\fack = t_{\ell}(u)\fack$ time:
$v$ has either sent $m$ or sent $\ell-1$ other messages,
and $u$ has either sent $m$ or sent $\ell$ other messages.
If $v$ has sent $m$ or at least $\ell$ messages by $s$ then we are done.
If it has not, then by time $s$,
$u$ will have either sent it $m$ or a new message (i.e., distinct from the $\ell-1$
messages $v$ has already sent by time $s$).
In either case, in the $\fack$ time that passes between
$s$ and $t_{\ell}(v)\fack$,
$v$ will either send $m$ or an $\ell^{th}$ message,
establishing the key claim for $t_{\ell}(v) = h$.
\end{proof}

\subsection{The BMMB Algorithm for r-Restricted G'}
\label{subsec:formal proof for BMMB}



We have shown that moving from $G'=G$ to an unrestricted
definition of $G'$ slows down the performance of BMMB; i.e., replacing a $D\fprog$
factor with $D\fack$, which might be substantially slower.
In this section we seek a middle ground---attempting to identify
just enough constraints on $G'$ to enable faster MMB performance.
In more detail, we study BMMB with an $r$-restricted $G'$ and prove that its performance
scales well with $r$:

\begin{theorem}
For any $r \geq 1$,
the BMMB algorithm solves the MMB problem in $O(D\fprog + r k\fack)$ time in the standard abstract MAC layer model
with an $r$-restricted $G'$.
\label{thm:bmmb:r}
\end{theorem}

Notice that for small $r$, this bound is better than the $O((D+k)\fack)$ bound we established in Theorem~\ref{thm:bmmb:arbitrary} for arbitrary $G'$,
and comes close to matching the bound proved in \cite{kuhn:2011abstract} for the case where $G' = G$.
This result implies the $D\fack$ factor of the arbitrary $G'$ bound is somehow dependent on the possibility
of $G'$ edges between nodes distant in $G$.
We emphasize that the above bound does not follow naturally from the $O(D\fprog + k \fack)$ bound
for $G'=G$. The presence of unreliability fundamentally disrupts the core induction of this existing bound
and requires a new approach.

\shortOnly{

The complete proof for Theorem~\ref{thm:bmmb:r} can be found in the full version of this paper~\cite{full-version}.
%
%
%
%
To provide intuition for this argument, however,
we present below a summary of the full proof and establish its key lemma. 

In more detail, fix some execution and an arbitrary message $\mm$, among the $k$ messages provided to $\BMMB$ to broadcast in this execution,
that arrive at some node $i_0\in V$ at time $t=0$. 
For use in our complexity analysis, we define certain time bounds.
For integers $\ell \geq 1 $ and $d\geq 0$, we define:
$$
t_{d,\ell}:=
(d+ (r+1)\ell-2)\cdot \fprog + r\cdot(\ell-1)\cdot \fack
~.$$
We also use the notation {\em completes}, with respect to a node $u$ and message $m$, to describe
the event of node $u$ receiving an acknowledgment for $m$.
The key lemma in our proof of Theorem~\ref{thm:bmmb:r} is:

\begin{lemma}\label{lemma:bmmblemma}
  Let $j$ be a node at distance $d=d_G(i_0,j)$ from $i_0$ in $G$.
  Let $\ell$ be any positive integer. Then:
	(1) Either $j$ receives $m$ or it receives at least $\ell$ messages by time $t_{d,\ell}$.
%
%
	(2) Either $j$ completes $m$ or it completes at least $\ell$ messages by time $t_{d,\ell}+\fack$.
%
%
\label{lem:bmmb:time}
\end{lemma}

Given that
$t_{D,k} =O(D \fprog + r\cdot k\cdot \fack)$,
Theorem \ref{thm:bmmb:r} follows directly from this lemma.
We note that
in the setting where $G'=G$, when  a node $j$ has completed sending a message, it follows that all of its neighbors must have received it.
In our setting, however, where  $G'\not= G$, we must also consider $G'$ neighbors for which this is not true.
In fact, although, one might expect that $G'$ neighbors will receive all message within $r\cdot \fack$ time,
that  too turns out to be not true, due to possible interference with other messages.
To cope with such problems,
we carry out a finer analysis of conditions that guarantee progress of messages through the network.
Specifically, we define two {\em positive progress} scenarios,
{\em fast positive progress}
and
{\em slow positive progress},
in which  node $i$ is guaranteed to receive some ``new'' message in $\fprog$ time and in $z\cdot \fack$ time, respectively (where $z$ is some positive integer).
(Recall that the basic progress definition does not require that the received message is ``new'': this is a condition we must established with respect
to our given algorithm.)
%
%
\begin{itemize}
\item {\bf \em Fast positive progress:}
%
Suppose that at time $t$,
(1) every message that $i$ has received is already completed at all of $i$'s
neighbors in $G'$; and
(2) some neighbor $j$ of $i$ in $G$ has received some message that $i$ has not yet received.
Then $i$ will receive a new message within $\fprog$ time.
\item 
{\bf \em Slow positive progress:}
Suppose that at time $t$,
(1) $j$ has completed at least $\ell-1$ messages;
(2) every message that $j$ has completed is also completed at all of $j$'s neighbors in $G^z$; and
(3) there exists at least one neighbor of $j$ in $G^z$ that has received at least $\ell$ messages.
Then $j$ will receive at least $\ell$ messages by time $t+ z\cdot \fack$.
\end{itemize}

With these preliminaries established, we can now sketch our proof for our key lemma.

\bigskip

{\em Proof Sketch} (for Lemma~\ref{lemma:bmmblemma}).
We use a double induction for $\ell$ as an ``outer'' induction and for distance $d$ as an ``inner'' induction.
Such a double induction was also used in \cite{kuhn:2009,kuhn:2011abstract} for the case where $G'=G$.
However, the current used different bounds and different inductive hypotheses, based on the two positive progress scenarios above.
For the base where $\ell=1$, we prove Part 1 by induction on $d$.
We assume that Part 1 holds for $d-1$.
Thus,
there exists a neighbor $j'$ of $j$ in $G$ at distance $d-1$ from $i_0$ (in $G$) such that $j'$ receives at least one message by time $t_{d-1,1}$.
Thus, $j$ receives at least one message by time $t_{d-1,1}+\fprog$.
Part 1 follows, since $t_{d,1}=t_{d-1,1}+\fprog$.
Hence, also, $j$ will complete its first message by time $t_{d,1} + \fack$ as needed for
Part 2.

Now, we are ready to show the inductive step for $\ell$.
Let $\ell\geq 2$.
We assume the lemma
statement for $\ell-1$ (and for all $d$) and prove it for $\ell$.
To prove the lemma statement for $\ell$, we use a second, ``inner'' induction, on the distance
$d$ (in $G$) from $i_0$  and the destination $j$.
For the base, $d = 0$, Part 1 follows immediately.
Part 2 follows, since
(1) $
\ell \fack \leq t_{\ell,0}$;
(2) if the position of $m$ in $bcastq_{i_0}$ is $\leq \ell$, then $i_0$ completes $m$ by time
$\ell \fack$; and
(3) if the position of $m$ in $bcastq_{i_0}$ is
greater than $\ell$, then $i_0$ completes at least $\ell$ messages by time
$\ell \fack$.

For the inductive step of the inner induction, we assume $d \geq 1$.
Assume both parts of the lemma for
(1) $\ell -1$ (as ``outer'' induction hypothesis) for all distances;
and
(2) for $\ell$ for distance $d-1$ (as ``inner'' induction hypothesis).
We prove both parts of the lemma for $\ell$ and distance $d$.

We now sketch Part 1.
(Part 2 follows from Part 1.)
Let $t^*=t_{d+r,\ell-1}$.
Note that $t^* +r\cdot \fack \leq t_{d,\ell}$ (by a simple algebraic calculation).
If either (1) $j$ completes $m$ by time $t^*$ or
(2) $j$ completes at least $\ell$ messages by time $t^*$,
then Part 1 follows, since $t^*<t_{d,\ell}$.
So, suppose the contrary, that $j$ does not completes $m$ and it completes exactly $\ell-1$ messages by time $t^*$.
(By the outer inductive hypothesis for $\ell-1$, if $j$ does not complete $m$ by time $t^*$, then it must complete at least $\ell-1$ messages by that time.)
We then show
(by using the outer inductive hypothesis for $\ell-1$ for all neighbors $j'$ of $j$ in $G^r$),
that, if there exists some neighbor $j'$ of $j$ in $G^r$ that
completes, by time $t^*$, a different message set than the message set that $j$ completes by that time,
then the {\em slow positive progress scenario} occurs at time $t^*$ with respect to some $z\leq d_{G}(j,j')\leq r$.
This implies that $j$ receives at least $\ell$ messages by time $t^*+r\cdot \fack\leq t_{d,\ell}$, as needed for Part 1.

Finally, we consider the case where $j$ and all of its neighbors in $G^r$ complete exactly the same message set by time $t^*$.
In that case, hypothesis (1) of the fast positive progress scenario holds at time $t^*$, but  hypothesis (2) of that scenario does not necessarily holds for that time.
We then use the outer  inductive hypothesis for $\ell$ with the inner inductive hypothesis for $d-1$ to show that there exists a neighbor $j'$ of $j$ in $G$ at distance $d-1$ from $i_0$
(in $G$) that receives at least $\ell$ messages by time $t_{d-1,\ell}$.
This implies that, if $j$ receives exactly $\ell-1$ messages by time $t_{d-1,\ell}$ (that is, $j$ does not receive additional messages during the time interval $[t^*,t_{d-1,\ell}]$),
then the {\em fast positive progress scenario} occurs at time $t_{d-1,\ell}$.
Therefore, in both cases, $j$ receives at least $\ell$ messages by time $t_{d-1,\ell} + \fprog$.
Part 1 follows since $t_{d,\ell}=t_{d-1,\ell}+ \fprog$ (by a simple algebraic calculation).
\qed

} 

\fullOnly{


We present the 
proof of  Theorem~\ref{thm:bmmb:r} using a more formal description of the abstract MAC layer.
We can model our systems using \emph{Timed I/O Automata} \cite{TIOA2010}.

Consider some positive integer $r$.
The $r$'th power $G^r(V,E^r)$ of a graph $G$ is a graph with the same vertex set $V$, and two nodes $v,u\in V$
are adjacent, $(u,v)\in E^r$, when their distance in $G$ is at most $r$.
That is,
$E^r=\{ (u,v) \mid u\not= v \mbox{ and } d_G(u,v) \leq r\}$.
(The $r$th power  graph $G^r$ does not includes self-loops.)
For a given node $j\in V$,
let $N^r_G(j)=\{ j' \mid d_G(j,j')\leq z \}$ denote the set of nodes that are within $r$ hops of $j$
in $G$, including $j$ itself.
A graph $G'=(V',E')$ is a subgraph of $G=(V,E)$ (denoted by $G\subseteq G'$), if $V=V'$ and $E' \subseteq E$.

\subsubsection{Guarantees for the Abstract MAC Layer}
\label{sec:model:prop}

Here we provide a set of properties that constrain the
behavior of the abstract MAC layer automaton.
Technically, these properties are expressed for admissible timed
executions of the timed I/O automaton modeling the complete system.

\paragraph{Well-Formedness Properties.}
We assume some constraints on the behavior of the user automata.
Let $\alpha$ be an admissible execution of the system consisting of
user and abstract MAC layer automata.
We say $\alpha$ is {\em user-well-formed} if the following hold, for
every $i$:
\begin{enumerate}
\item
Every two $bcast_i$ events have an intervening $ack_i$ or $abort_i$ event.
\item
Every $abort(m)_i$ is preceded by a $bcast(m)_i$ (for the same $m$)
with no intervening $bcast_i$, $ack_i$, or $abort_i$ events.
\end{enumerate}

The rest of this subsection gives constraints on the behavior of the
abstract MAC layer automaton, in system executions that are
user-well-formed.
Thus, from now on in the section, we assume that $\alpha$ is a
user-well-formed system execution.

\paragraph{Constraints on Message Behavior.}
We assume that there exists a ``cause'' function that maps every $rcv(m)_j$
event in $\alpha$ to a preceding $bcast(m)_i$ event, where $i \ne j$,
and that maps each $ack(m)_i$ and $abort(m)_i$ to a preceding $bcast(m)_i$.

We use the term \emph{message instance} to refer to the set
consisting of a $bcast$ event and all the other events that are
related to it by the \emph{cause} function.

We now define two safety conditions and one liveness condition
regarding the relationships captured by the cause function:

\begin{enumerate}
\item {\bf Receive correctness:}
Suppose that $bcast(m)_i$ event $\pi$ causes $rcv(m)_j$ event $\pi'$
in $\alpha$.
Then $(i,j) \in E'$, and no other $rcv(m)_j$ event or $ack(m)_i$ event
caused by $\pi$ precedes $\pi'$.

\item  {\bf Acknowledgment correctness:}
Suppose that $bcast(m)_i$ event $\pi$ causes $ack(m)_i$ event $\pi'$ in $\alpha$.
Then for every $j$ such that $(i,j) \in E$, a $rcv(m)_j$ event caused
by $\pi$ precedes $\pi'$.
Also, no other $ack(m)_i$ event or $abort(m)_i$ caused by $\pi$
precedes $\pi'$.

\item  {\bf Termination:}
Every $bcast(m)_i$ causes either an $ack(m)_i$ or an $abort(m)_i$.
\end{enumerate}

\paragraph{Time Bounds.}
We now impose upper bounds on the time from a $bcast(m)_i$ event to
its corresponding $ack(m)_i$ and $rcv(m)_j$ events.\footnote{
  We express these here as constants rather than functions, because we
  will not worry about adaptive bounds in this paper.
  }

Let $\fack$ and $\fprog$ be positive real numbers.
We use these to bound delays for a specific message to be delivered
and an acknowledgment received (the ``acknowledgment delay''), and for
{\em some} message from among many to be received (the ``progress
delay'').
We think of $\fprog$ as smaller than $\fack$, because the time to
receive {\em some} message among many is typically less than the time
to receive a specific message.
The statement of the acknowledgment bound is simple:

\begin{enumerate}
\setcounter{enumi}{3}
\item
{\bf Acknowledgment bound:}
Suppose that a $bcast(m)_i$ event $\pi$ causes an $ack(m)_i$ event
$\pi'$ in $\alpha$.
Then the time between $\pi$ and $\pi'$ is at most $\fack$.
\end{enumerate}

The statement of the progress bound is a bit more involved, and
requires some auxiliary definitions.
Let $\alpha'$ be a closed execution fragment within the given execution
$\alpha$,\footnote{Formally, that means that there
  exist fragments $\alpha''$ and $\alpha'''$
  such that $\alpha = \alpha'' \alpha' \alpha'''$, and moreover, the first
  state of $\alpha'$ is the last state of $\alpha''$. Notice,
  this allows $\alpha'$ to begin and/or end in the middle of a trajectory.}
and let $j$ be a process.
Then define:
\begin{itemize}
\item
$connect(\alpha',j)$ is the set of message instances in $\alpha$ such that
$\alpha'$ is wholly contained between the $bcast$ and terminating
event ($ack$ or $abort$) of the instance, and $(i,j) \in E$, where $i$
is the originator of the $bcast$ of the message instance.
\item
$contend(\alpha',j)$ is the set of message instances in $\alpha$ for
which the terminating event does not precede the beginning of $\alpha'$, and
$(i,j) \in E'$, where $i$ is the originator of the $bcast$ of the
message instance.
\end{itemize}

\begin{lemma}
For every $\alpha'$ and $j$,
$connect(\alpha', j) \subseteq contend(\alpha',j)$.
\end{lemma}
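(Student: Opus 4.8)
The plan is to prove the inclusion by a direct element-chase. I would fix an arbitrary message instance $\mu \in connect(\alpha', j)$, name its originator $i$, its $bcast(m)_i$ event $\pi$, and its terminating event $\pi''$ (an $ack(m)_i$ or $abort(m)_i$ event, which exists by the Termination property), and then verify that $\mu$ satisfies each of the two defining conditions of $contend(\alpha', j)$. Since the two sets are defined by conditions that differ in exactly two respects, the whole proof reduces to checking that in each respect the $connect$ condition is at least as strong as the corresponding $contend$ condition.

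First I would dispatch the edge condition. Membership in $connect(\alpha', j)$ gives $(i,j) \in E$, whereas membership in $contend(\alpha', j)$ requires $(i,j) \in E'$. This step is immediate from the containment of edge sets furnished by the subgraph relationship between the two graphs, namely $E \subseteq E'$: every edge of $G$ is also an edge of $G'$, so $(i,j) \in E$ yields $(i,j) \in E'$. Next I would handle the timing condition, that the terminating event $\pi''$ of $\mu$ does not precede the beginning of $\alpha'$. Here I would unwind the phrase ``$\alpha'$ is wholly contained between the $bcast$ and terminating event of the instance'': this means $\pi$ occurs no later than the first state of $\alpha'$ and $\pi''$ occurs no earlier than the last state of $\alpha'$. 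Since the last state of $\alpha'$ is at or after its first state, $\pi''$ occurs at or after the beginning of $\alpha'$, and hence in particular does not precede it. Having verified both conditions, I conclude $\mu \in contend(\alpha', j)$, and since $\mu$ was arbitrary, $connect(\alpha', j) \subseteq contend(\alpha', j)$.

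I do not expect a genuine obstacle, since the statement is essentially a bookkeeping comparison of the two definitions; even the degenerate case in which $\alpha'$ is a single state (first state equal to last state) is covered by the timing argument above. The one place demanding care is the orientation of the edge containment: the argument goes through precisely because the relevant direction is $E \subseteq E'$, so that the $E$-edge witnessing membership in $connect$ automatically supplies the $E'$-edge required for $contend$. Reversing this containment would break the step, so I would make sure to pin down that direction before writing the final version.
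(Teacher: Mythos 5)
Your proof is correct and is precisely the definitional argument the paper has in mind (the paper states this lemma without proof, treating it as immediate): the containment reduces to $E \subseteq E'$, the standing model assumption that the reliable communication graph $G$ is contained in the potential-communication graph $G'$, together with the observation that an instance whose lifetime wholly contains $\alpha'$ in particular does not terminate before $\alpha'$ begins. Your caution about the orientation of the edge containment is well placed --- the paper's generic definition of ``subgraph'' is worded with the containment running the other way --- but the fact you need, $E \subseteq E'$, is exactly the direction imposed by the abstract MAC layer model (it is also forced by combining receive correctness with acknowledgment correctness), so the argument goes through.
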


\begin{enumerate}
\setcounter{enumi}{4}
\item
{\bf Progress bound:}
For every closed fragment $\alpha'$ within $\alpha$, and for every
process $j$, it is not the case that all three of the following
conditions hold:
  \begin{enumerate}
  \item The total time of $\alpha'$ is strictly greater than $\fprog$.
  \item $connect(\alpha', j) \neq \emptyset$.
  \item No $rcv_j$ event from a message instance in $contend(\alpha', j)$
    occurs by the end of $\alpha'$.
  \end{enumerate}
In other words, $j$ should receive {\em some} message within time
$\fprog$ provided that at least one message is being sent by a $G$-neighbor.
\end{enumerate}

Note that our definitions allow a $rcv$ for a particular $bcast$
to occur after an $abort$ for that $bcast$.
We impose a (small) bound $\epsilon_{abort}$ on the amount of time
after an $abort$ when such a $rcv$ may occur.

%

\subsubsection{The Multi-Message Broadcast Problem}
\label{sec:mmb:prelim}

A user automaton is considered to be an {\em MMB protocol} provided
that its external interface includes an $arrive(m)_i$ input
and $deliver(m)_i$ output for each user process $i$
and message $m\in {\cal M}$.

We say an execution of an MMB protocol is {\em MMB-well-formed} if and only if it
contains at most one $arrive(m)_i$
event for each $m\in {\cal M}$;
that is, each broadcast message is unique.
We say an MMB protocol {\em solves the MMB problem}
if and only if for every MMB-well-formed (admissible) execution $\alpha$
of the MMB protocol composed with a MAC layer,
the following hold:
\begin{enumerate}
\item[(a)]
For every $arrive(m)_i$ event in $\alpha$ and every process $j$,
$\alpha$ contains a $deliver(m)_j$ event.
\item[(b)]
For every $m\in {\cal M}$ and every process $j$,
$\alpha$ contains at most one $deliver(m)_j$ event and it
comes after an $arrive(m)_i$ event for some $i$.
\end{enumerate}

We describe a simple MMB protocol that achieves efficient runtime.

\begin{quote}\small
{\bf The Basic Multi-Message Broadcast (BMMB) Protocol }\\
Every process $i$ maintains a FIFO queue named $bcastq$ and a set
named $rcvd$.
Both are initially empty.

If process $i$ is not currently sending a message (i.e., not waiting
for an $ack$ from the MAC layer) and $bcastq$ is not empty, the
process immediately (without any time-passage) $bcast$s the message at
the head of $bcastq$ on the MAC layer.

When process $i$ receives an $arrive(m)_i$ event, it immediately
performs a local $deliver(m)_i$ output and adds $m$ to the back of its
$bcastq$, and to its $rcvd$ set.

When $i$ receives a  message $m$ from the MAC layer it checks
its $rcvd$ set.
If $m \in rcvd$, procsss $i$ discards the message.
Otherwise, $i$ immediately performs a $deliver(m)_i$ event,
and adds $m$ to the back of its $bcastq$ and to its $rcvd$ set.
\end{quote}

\begin{theorem}
The BMMB protocol solves the MMB problem.
\label{thm:bmmb}
\end{theorem}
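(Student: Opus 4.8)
The plan is to verify the two defining conditions of the MMB problem separately: condition (b), the safety requirement (each process delivers each message at most once, and only after it has arrived somewhere), and condition (a), the liveness requirement (each arrived message is eventually delivered everywhere).

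For safety, the two key facts are the global uniqueness of $arrive$ events (guaranteed by MMB-well-formedness) and the $rcvd$-set discipline, which together prevent duplicate deliveries. First I would observe that a process $j$ performs $deliver(m)_j$ only in two situations: when processing $arrive(m)_j$, or when processing a $rcv(m)_j$ with $m \notin rcvd$; in both cases it inserts $m$ into $rcvd$ at the same instant, and thereafter every $rcv(m)_j$ is discarded. Using the cause function, any $rcv(m)_j$ traces back (via some $bcast(m)_i$ with $i \ne j$), and since the originating $arrive(m)_j$, if it exists, is causally first, $m$ cannot already be in $rcvd$ when $arrive(m)_j$ fires. Hence at most one $deliver(m)_j$ occurs. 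For the ``after an $arrive$'' clause I would argue by well-founded induction along the cause function: every $bcast(m)_i$ in BMMB is issued only because $m$ sits at the head of $bcastq$ at $i$, and $m$ entered that queue only through an $arrive(m)_i$ or a $rcv(m)_i$; following the strictly earlier $bcast$ events that cause each such $rcv$, the chain must terminate at the unique $arrive(m)_\cdot$ event, which therefore precedes every $deliver(m)_j$.

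For liveness I would use a standard flooding argument, by induction on the graph distance $d_G(i,j)$ from the originator $i$ of $arrive(m)_i$ to an arbitrary target $j$, assuming $G$ is connected. The base case $j = i$ is immediate, since $i$ delivers $m$ when processing $arrive(m)_i$. For the inductive step, suppose a process $p$ with $d_G(i,p) = k$ has delivered $m$; then $p$ has enqueued $m$ in its $bcastq$. The crucial sub-claim is that $p$ eventually performs $bcast(m)_p$: only finitely many messages sit ahead of $m$ at the moment it is enqueued, no later message can overtake it in the FIFO queue, and each preceding $bcast$ terminates---since BMMB never issues an $abort$, Termination forces each $bcast$ to cause an $ack$, and the Acknowledgment bound caps each such delay by $\fack$---so $m$ reaches the head and is broadcast within finite time. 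Once $bcast(m)_p$ obtains its $ack(m)_p$, Acknowledgment correctness guarantees a $rcv(m)_q$ for every $G$-neighbor $q$ of $p$; each such $q$, upon that $rcv$, either already has $m \in rcvd$ (so delivered earlier) or delivers it then. In particular every neighbor at distance $k+1$ delivers $m$, completing the induction.

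I expect the liveness direction, and specifically the queue-progress sub-claim, to be the main obstacle: I must rule out the possibility that $m$ is perpetually delayed behind other traffic in $p$'s queue. This rests on the FIFO discipline (messages enqueued after $m$ cannot delay it) together with the Termination and Acknowledgment-bound properties, which jointly ensure each of the finitely many predecessors clears the head in at most $\fack$ time. A secondary point to pin down is the standing connectivity assumption on $G$ needed so the induction reaches all $j$; note that the Progress bound is not required for this qualitative statement, as the Acknowledgment bound already supplies the finite delay that drives the flood.
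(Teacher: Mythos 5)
Your argument is sound, but note that the paper itself states Theorem~\ref{thm:bmmb} without giving a proof in this section---its effort is concentrated on the quantitative delivery bound (Lemma~\ref{lemma:bmmblemma-full} and Theorem~\ref{thm:bmmb:time-full})---so there is no in-text proof to match yours against. Your decomposition is the natural one and all the essential points are present: the cause-function induction showing that the unique $arrive(m)_{i_0}$ precedes every $bcast(m)$ and hence every $rcv(m)$ (which both justifies the ``after an arrive'' clause and rules out $m\in rcvd$ when the $arrive$ fires, since BMMB does not consult $rcvd$ on an $arrive$); the $rcvd$-set discipline for at-most-once delivery; and, for liveness, the FIFO queue-progress claim driven by Termination plus the fact that BMMB never issues an $abort$, so every $bcast$ is $ack$ed, followed by Acknowledgment correctness to push $m$ one hop further. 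You are also right that the Progress bound is not needed for the qualitative statement and that connectivity of $G$ must be assumed (the paper implicitly assumes it, since it speaks of the diameter $D$). Two small points worth pinning down if you write this up fully: the ``finitely many messages ahead of $m$'' step should be justified by observing that $bcastq_p$ is a finite list in any reachable state (only finitely many discrete events precede any state of the execution), and the quantitative route the paper actually takes---bounding $|\cC_i(t+k\fack)|$ via Lemma~\ref{lemma: CG2} and then invoking Lemma~\ref{lemma:bmmblemma-full}---yields liveness with explicit time bounds, but only for the special case where all $arrive$ events occur at time $0$, whereas your elementary argument covers the general MMB problem.
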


We give two definitions that we will use in our complexity proof.
In the following, let $\alpha$ be some MMB-well-formed
execution of the BMMB protocol composed with a MAC layer.
We begin with two definitions that we will use in our complexity proof.

\paragraph{$get$ events.}
We define a $get(m)_i$ event with respect to $\alpha$,
for some arbitrary message $m$ and process $i$,
to be one in which process $i$ first learns about message $m$.
Specifically, $get(m)_i$ is the first $arrive(m)_i$ event
in case message $m$ arrives at
process $i$,
otherwise, $get(m)_i$ is the first $rcv(m)_i$ event.

\paragraph{$clear$ events}
Let $m\in\mathcal{M}$ be a message for which an $arrive(m)_i$ event
occurs in $\alpha$.  We define $clear(m)$ to describe the final
$ack(m)_j$ event in $\alpha$ for any process $j$.\footnote{
  By the definition of BMMB, if an $arrive(m)_i$ occurs, then $i$
  eventually $bcast$s $m$, so $ack(m)_i$ eventually occurs. Furthermore,
  by the definition of BMMB, there can be at most one $ack(m)_j$ event for
  every process $j$. Therefore, $clear(m)$ is well-defined.}

\subsubsection{Proof Preliminaries}
\label{subsec:Proof Prelim}

For the rest of Section \ref{sec:standard}, we consider the special case of the general MMB problem
in which all messages arrive from the environment at time $t_0 = 0$, that is, all {\em arrive} events occur at time 0.
We fix a particular MMB-well-formed execution $\alpha$ of the BMMB protocol composed with a MAC layer.
We assume that an $arrive(m)_{i_0}$ event occurs in $\alpha$ for
some message $m \in \mathcal{M}$ at some node $i_0$, at time $t_0=0$.

For each node $i\in V$ and each time $t$, we introduce two sets of messages,
which we call $\cR$ (for "received messages") and $\cC$ (for "completed messages").
%
%
%
%
%
We define:
\cmpdell{\footnote{So $\alpha$, $m$, $i_0$, 
  and $\mathcal{M}'$ are implicit parameters.}}
\begin{itemize}
\item
$\cR_i(t) \subseteq \mathcal{M}\cmpdell{'}$ is the set of messages $m' \in
\mathcal{M}\cmpdell{'}$ for which the  $get(m')_i$ event occurs by time $t$.
\item
$\cC_i(t)\subseteq\mathcal{M}\cmpdell{'}$ is the set of messages
$m'\in\mathcal{M}\cmpdell{'}$ for which the $ack(m')_i$ event occurs by time
$t$.
\end{itemize}
That is, $\cR_i(t)$ is the set of messages that have been received by
process $i$ by time $t$, and $\cC_i(t)$ is the set of messages that
process $i$ has finished (completed) processing by time $t$.\footnote{
Note that both $\cR_i(t)$ and $\cC_i(t)$ may include $m$.}

The following two lemmas express some very basic properties of the $\cR$ and $\cC$ sets.

\begin{lemma}

For every $i,i',t$ and $t'$ such that $t\leq t'$:
  \begin{enumerate}
  \item $\cR_i(t)\subseteq \cR_i(t')$ and $\cC_i(t)\subseteq \cC_i(t')$.
  \label{item:lemma: t<t' => C(t) <= C(t')}

  \item $\cC_{i}(t) \subseteq \cR_{i}(t')$.
  \label{item:lemma: C_i(t) subseteq R_i(t)}

  \item If $i$ and $i'$ are neighbors in $G$, then $\cC_{i'}(t) \subseteq \cR_{i}(t')$.
  \label{item:lemma: i and j neigh t'<=t C_i(t') subseteq R_j(t)}

  \end{enumerate}
\label{lemma:bmmb: C(t') subset C(t'')}
\label{lemma: CG1}
\label{lemma: CG4}
\end{lemma}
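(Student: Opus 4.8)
The plan is to handle the three parts in sequence, each by unfolding the definitions of $\cR$ and $\cC$ and then appealing to the BMMB protocol description together with the MAC-layer guarantees. Part (1) is a pure monotonicity observation: both $\cR_i(t)$ and $\cC_i(t)$ are defined by an event ($get(m')_i$, respectively $ack(m')_i$) \emph{occurring by time $t$}, and since $t\le t'$ every event occurring by time $t$ also occurs by time $t'$. Both inclusions follow immediately, and I will then use this monotonicity freely to reduce parts (2) and (3) to the corresponding single-time statements.

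For part (2) I will show that, for a fixed process $i$, a $get(m')_i$ event precedes any $ack(m')_i$ event; combined with monotonicity this yields $\cC_i(t)\subseteq\cR_i(t)\subseteq\cR_i(t')$. Suppose $ack(m')_i$ occurs. By Termination and Acknowledgment correctness it is caused by a preceding $bcast(m')_i$ event, and by the BMMB protocol process $i$ only $bcast$s a message sitting at the head of its $bcastq$. A message enters $bcastq$ only through an $arrive(m')_i$ event or through the first non-discarded $rcv(m')_i$ event; both are ``learning'' events, so the defining $get(m')_i$ event occurs no later than the insertion into $bcastq$, hence before the $bcast$ and a fortiori no later in time than the $ack$. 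Thus if $ack(m')_i$ occurs by time $t$, then $get(m')_i$ occurs by time $t$.

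Part (3) is the main step, and it is where I will invoke the MAC-layer semantics directly. Let $m'\in\cC_{i'}(t)$, so $ack(m')_{i'}$ occurs by time $t$, caused by some $bcast(m')_{i'}$ event $\pi$. Because $i$ and $i'$ are neighbors in $G$ we have $(i',i)\in E$, so Acknowledgment correctness guarantees a $rcv(m')_i$ event caused by $\pi$ that precedes $ack(m')_{i'}$ and hence occurs by time $t$. It remains to argue that $get(m')_i$ occurs by time $t$. If no $arrive(m')_i$ event occurs, then $get(m')_i$ is by definition the first $rcv(m')_i$, which is at or before the one just produced, so it occurs by time $t$. If an $arrive(m')_i$ does occur, then by the standing assumption of this section every $arrive$ event happens at time $0\le t$, so again $get(m')_i$ occurs by time $t$. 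In either case $m'\in\cR_i(t)\subseteq\cR_i(t')$.

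The subtle point I expect to handle with care is the bookkeeping in part (3): the definition of $get$ prefers the first $arrive$ over any $rcv$, so one cannot simply identify $get(m')_i$ with the $rcv$ event produced by Acknowledgment correctness. This is exactly where the section's hypothesis that all $arrive$ events occur at time $0$ closes the gap; without it one would need a separate argument bounding the time of the first $arrive(m')_i$ relative to $t$. Everything else reduces to routine unwinding of the cause function and the protocol's queue management.
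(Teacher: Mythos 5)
Your proof is correct and is exactly the routine unwinding the paper has in mind: the paper's own proof of this lemma is just the single word ``Straightforward,'' so your write-up simply supplies the omitted details (monotonicity of ``occurs by time $t$,'' the $get$-before-$bcast$-before-$ack$ ordering forced by the BMMB queue for part (2), and Acknowledgment correctness plus the all-arrivals-at-time-$0$ convention for part (3)). No gaps; your flagged subtlety about $get$ preferring the first $arrive$ is handled correctly by the section's standing assumption.
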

\begin{proof}
Straightforward.
\end{proof}

\begin{lemma}
\label{lemma: CG6}
Fix $i$ and $t \geq 0$, and let $s$ be the final state at time $t$.
Then, in state $s$, $\bcastq_i$ contains exactly the messages in $\cR_i(t) - \cC_i(t)$.
\cmpdell{
\begin{enumerate}
\item
If in state $s$, $m'$ is in $\bcastq_i$, then $m' \in \mathcal{M}'$.
\item
In state $s$, $\bcastq_i$ contains exactly the messages in $\cR_i(t) - \cC_i(t)$.
\end{enumerate}
}
\end{lemma}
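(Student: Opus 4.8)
The plan is to prove a slightly strengthened invariant by induction over the discrete transitions of the timed I/O automaton, reading off not only the contents of $\bcastq_i$ but also the auxiliary set $rcvd_i$. Concretely, I would show that for the final state at every time $t$, (I) $rcvd_i$ equals $\cR_i(t)$, and (II) the set of messages in $\bcastq_i$ equals $\cR_i(t) - \cC_i(t)$ (in particular $\bcastq_i$ carries no duplicates, so viewing it as a set is honest). Carrying (I) alongside (II) is what makes the duplicate-suppression in BMMB tractable: the guard ``$m \in rcvd$'' that decides whether a $rcv(m)_i$ is discarded is exactly the test whether the current $rcv$ is the $get(m)_i$ event.

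First I would dispose of the base case and the trajectory steps. In the initial state $\bcastq_i$ and $rcvd_i$ are empty and no $get$ or $ack$ events have occurred, so $\cR_i = \cC_i = \emptyset$ and both parts hold. Along any trajectory no discrete events occur, so $rcvd_i$, $\bcastq_i$, and the sets $\cR_i(\cdot)$ and $\cC_i(\cdot)$ are all constant, and the invariant is preserved trivially. This reduces everything to a case analysis on the discrete events that can touch $\bcastq_i$ or $rcvd_i$.

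For $arrive(m)_i$ (which, by the standing assumption of this subsection, occurs only at time $0$, hence before any $rcv$ at $i$ and before the induced $ack(m)_i$): this event is $get(m)_i$, the protocol adds $m$ to both $rcvd_i$ and the back of $\bcastq_i$, and correspondingly $m$ newly enters $\cR_i$ while $m \notin \cC_i$; using uniqueness of $arrive$ and the fact that no $rcv(m)_i$ has happened yet, $m$ is not already present, so (I) and (II) are maintained. For $rcv(m)_i$ I split on the guard: if $m \notin rcvd_i$ then by (I) $m \notin \cR_i$, so this is $get(m)_i$ and the update mirrors the $arrive$ case (here I use Lemma~\ref{lemma: CG1}, namely $\cC_i(t) \subseteq \cR_i(t)$, to conclude $m \notin \cC_i$); if $m \in rcvd_i$ then by (I) $m \in \cR_i$ already, the event is not a $get$, $\cR_i$ is unchanged, and the protocol discards, so both sides are unchanged. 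Events $bcast(m)_i$ and $deliver(m)_i$ modify neither $\bcastq_i$ nor $rcvd_i$ and affect neither $\cR$ nor $\cC$ (the broadcast message stays at the head of $\bcastq_i$ until it is acknowledged), so they are immediate.

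The main obstacle is the $ack(m)_i$ case, where I must show that dequeuing the head of $\bcastq_i$ corresponds exactly to deleting $m$ from $\cR_i(t) - \cC_i(t)$. Here $m$ newly enters $\cC_i$, and I need (a) that $m$ was in $\bcastq_i$ just before, so that the removal is well-defined and the set genuinely loses $m$, and (b) that it is $m$, and not some other message, sitting at the head. For (a) I invoke the cause function and the Acknowledgment correctness property: $ack(m)_i$ is caused by a preceding $bcast(m)_i$ with no earlier $ack(m)_i$, so $m \in \cR_i$ (again via Lemma~\ref{lemma: CG1}) and $m \notin \cC_i$ before the event, whence by the inductive hypothesis (II) we get $m \in \bcastq_i$. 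For (b) I use that BMMB only $bcast$s the head of $\bcastq_i$, and only when not already awaiting an $ack$, so the unique outstanding broadcast is precisely the head, matching the message that the cause function assigns to this $ack$. Removing it leaves $\bcastq_i$ equal to $\cR_i(t) - \cC_i(t)$ now that $m$ has joined $\cC_i(t)$, while $rcvd_i = \cR_i(t)$ is untouched, which closes the induction and yields the lemma.
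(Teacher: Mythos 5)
Your proof is correct and follows essentially the same route as the paper: the paper's proof simply asserts, ``by the operation of the algorithm,'' that $\bcastq_i$ contains exactly the messages that have had a $get$ and no $ack$ at node $i$, which is by definition $\cR_i(t) - \cC_i(t)$. Your induction over discrete transitions (with the auxiliary invariant $rcvd_i = \cR_i(t)$ and the case analysis on $arrive$, $rcv$, and $ack$) is just the rigorous elaboration of that one-line appeal, including the same implicit reading that a message is dequeued only when its $ack$ occurs.
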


\begin{proof}
By the operation of the algorithm, $\bcastq_i$ contains exactly the
messages that have had a $get$ and no $ack$ at node $i$.
These
are exactly the elements of $\cR_i(t) - \cC_i(t)$.
\end{proof}

\begin{lemma}
\label{lemma:  CG2}
\label{lemma: CG3}
Fix $i$ and $t \geq 0$,
and let $s$ be the final state at time $t$.
\begin{enumerate}
\item
If in state $s$, $m'$ is in position $k\geq 1$ of $bcastq_i$, then $m'
\in \cC_i(t + k \fack)$.
\item
If in state $s$, $bcastq_i$ has length at least $k\geq 0$, then
$|\cC_i(t + k \fack)| \geq |\cC_i(t)| + k$.
\item
If $|\cR_i(t)| \geq k \geq 0$, then $|\cC_i(t + k \fack)| \geq k$.
\end{enumerate}
\end{lemma}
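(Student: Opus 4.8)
The plan is to prove part~1 directly by induction on the queue position $k$, and then to obtain parts~2 and~3 as counting corollaries. The engine of the whole argument is the combination of two facts about BMMB: first, the protocol never issues an $abort$, so by Termination together with the Acknowledgment bound, every $bcast(m')_i$ is followed by an $ack(m')_i$ within time $\fack$; second, whenever process $i$ is idle with a nonempty $bcastq_i$ it rebroadcasts the head \emph{with no time passage}. Together these say that the messages sitting in $bcastq_i$ at time $t$ are acknowledged one per $\fack$, in FIFO order.

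For part~1, let $m'$ occupy position $k$ of $bcastq_i$ in state $s$, and let me track the $ack$ times $a_1 \le a_2 \le \cdots$ of the messages currently in the queue, in FIFO order. I claim $a_j \le t + j\fack$ for each $j$, which for $j=k$ gives $m' \in \cC_i(t+k\fack)$. For the base case $j=1$: since $bcastq_i$ is nonempty in the final state at time $t$, the immediate-rebroadcast rule forces the head to already be under transmission, with its $bcast$ at some time $t'\le t$; hence $a_1 \le t' + \fack \le t + \fack$. For the inductive step, note that FIFO discipline guarantees that newly arriving messages are appended to the back, so the relative order of the first $k$ messages is never disturbed; thus the message at position $j+1$ becomes the head exactly when position $j$ is acknowledged, at time $a_j$, and is rebroadcast immediately (no time passage) at $a_j$. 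The Acknowledgment bound then gives $a_{j+1} \le a_j + \fack \le t + (j+1)\fack$, completing the induction.

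Parts~2 and~3 are then bookkeeping. For part~2, suppose $bcastq_i$ has length at least $k$. The messages at positions $1,\dots,k$ are distinct (each message enters $bcastq_i$ at most once, at its $get$ event), and by Lemma~\ref{lemma: CG6} they lie in $\cR_i(t)-\cC_i(t)$, hence none is in $\cC_i(t)$. By part~1 and monotonicity (Lemma~\ref{lemma: CG1}), each lies in $\cC_i(t+k\fack)$, as does all of $\cC_i(t)$. Since these $k$ new messages are disjoint from $\cC_i(t)$, we get $|\cC_i(t+k\fack)| \ge |\cC_i(t)| + k$. For part~3, write $c = |\cC_i(t)|$; Lemma~\ref{lemma: CG6} together with $\cC_i(t)\subseteq\cR_i(t)$ gives $|bcastq_i| = |\cR_i(t)| - c \ge k - c$. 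If $c \ge k$ the claim is immediate from monotonicity; otherwise apply part~2 with parameter $k-c>0$ to obtain $|\cC_i(t+(k-c)\fack)| \ge c + (k-c) = k$, and monotonicity in $t$ (since $k-c \le k$) finishes it.

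The main obstacle is the base case of part~1: I must argue carefully that in the final state at time $t$ the head of a nonempty queue is already being transmitted (so its $bcast$ time is at most $t$), which is where the ``no time passage'' clause of BMMB and the meaning of ``final state at time $t$'' in the timed-automaton model do the real work. Everything after that — the inductive step and the two counting corollaries — is routine once the FIFO-preservation observation is in hand.
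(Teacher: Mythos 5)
Your proof is correct and follows essentially the same route as the paper's: Part~1 by iterating the acknowledgment-bound argument for the head of the queue (your explicit induction on the position $k$ just unpacks the paper's ``repeated application of the $k=1$ case''), and Parts~2 and~3 as counting consequences of Part~1 together with Lemma~\ref{lemma: CG6}. Your Part~3 phrases the case split in terms of $k-|\cC_i(t)|$ rather than $|\cR_i(t)-\cC_i(t)|$ versus $k$, but the argument is the same.
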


\begin{proof}
\begin{enumerate}
\item
For $k=1$,
the abstract MAC layer properties say that, within
time $\fack$, $m'$ is acknowledged at $i$.
Therefore, $m' \in \cC_i(t + \fack)$, which yields the result for $k=1$.
The statement for general $k$ follows from repeated application of the
statement for $k=1$.
\item
The statement is trivial for $k=0$, so consider $k \geq 1$.
Part 1, applied to the first $k$ messages in $bcastq_i$ in state $s$,
implies that all of these messages are in $\cC_i(t + \fack)$.
Lemma~\ref{lemma: CG6},
implies that none of these messages
are in $\cC_i(t)$.
Therefore, $|\cC_i(t + \fack)| \geq |\cC_i(t)| + k$.
\item
Suppose that  $|\cR_i(t)| = |\cC_i(t)| + |\cR_i(t) - \cC_i(t)| \geq k$.
By Lemma~\ref{lemma: CG6}, every element of $\cR_i(t) - \cC_i(t)$ is on
$bcastq_i$ in state $s$, so the length of $bcastq_i$ in state $s$ is
at least $|\cR_i(t) - \cC_i(t)|$.
We consider two cases. If $|\cR_i(t) - \cC_i(t)|\leq k$, then
$$
|\cC_i(t+k \cdot \fack)| \geq |\cC_i(t+ |\cR_i(t) - \cC_i(t)| \cdot \fack)| \geq |\cC_i(t)|+|\cR_i(t) - \cC_i(t)| \geq k,
$$
as needed for Part 3, where the first inequality follows, since
$\cC_i(t+ |\cR_i(t) - \cC_i(t)|\cdot \fack) \subseteq \cC_i(t+k \cdot \fack) $
by
Part \ref{item:lemma: t<t' => C(t) <= C(t')}
of Lemma \ref{lemma:bmmb: C(t') subset C(t'')}
(with $t= t+ |\cR_i(t) - \cC_i(t)|\cdot \fack$ and $t'=t+k \cdot \fack$);
the second inequality follows by Part 2 of the lemma;
and the last inequality holds by the assumption of Part 3 (of the lemma).
If $|\cR_i(t) - \cC_i(t)| > k$, then
$|\cC_i(t+k \cdot \fack)| \geq |\cC_i(t)|+k$, by Part 2 of the lemma.
Part 3 follows.
\end{enumerate}
\end{proof}

The following corollary is a special case of Part 2 of Lemma \ref{lemma:  CG2}.

\begin{corollary}
\label{lemma:bmmb: ack and prog progress}
\label{coro:bmmb: ack and prog progress}
Fix $i$, $t\geq 0$ and $\ell>0$.
If $|\cC_i(t)| \geq \ell-1$ and $|\cR_i(t)| \geq \ell$, then $|\cC_i(t+\fack)| \geq \ell$.
\end{corollary}
\begin{proof}
The case where $|\cC_i(t)|\geq \ell$ follows
by Part \ref{item:lemma: t<t' => C(t) <= C(t')}
of Lemma \ref{lemma:bmmb: C(t') subset C(t'')}.
Suppose that $|\cC_i(t)| = \ell-1$.
Since $|\cR_i(t)| \geq \ell$, it follows that $\cR_i(t)-\cC_i(t)\not=\emptyset$.
Therefore,
by
Lemma \ref{lemma: CG6},
at the final state $s$ at time $t$, $bcastq_i$ has length at least at least one.
Thus, by Part 2 of Lemma \ref{lemma:  CG2}, $|\cR_i(t+\fack)| \geq |\cR_i(t)|+1$.
The Corollary follows.
\end{proof}

Now we have two key lemmas that describe situations when a process $i$
is guaranteed to receive a new message.
The first deals with $i$ receiving its first message.\footnote{
  Actually, this lemma is formally a corollary to the following one,
  but it might be nicer to see this proof as a ``warm-up''.
  }

\begin{lemma}
\label{lemma: CG5}
Let $i$ and $j$ be neighboring nodes in $G$, and suppose $t \geq 0$.
If $\cR_j(t) \neq \emptyset$, then $\cR_i(t + \fprog) \neq \emptyset$.
\end{lemma}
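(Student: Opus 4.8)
The plan is to argue by contradiction, feeding an ongoing broadcast at $j$ into the progress bound to force $i$ to receive something. First I would dispose of the trivial case: if $\cR_i(t)\neq\emptyset$, then monotonicity (Part~\ref{item:lemma: t<t' => C(t) <= C(t')} of Lemma~\ref{lemma:bmmb: C(t') subset C(t'')}) gives $\cR_i(t+\fprog)\supseteq\cR_i(t)\neq\emptyset$ and we are done. So I would assume toward a contradiction that $\cR_i(t+\fprog)=\emptyset$; in particular no $rcv_i$ event occurs by time $t+\fprog$, and by monotonicity $\cR_i(t)=\emptyset$ as well.

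The key structural observation is that $j$ must have an acknowledgment outstanding across the whole window. Since $i$ and $j$ are $G$-neighbors, $(j,i)\in E$, so the acknowledgment correctness property says any $ack(m')_j$ must be preceded by a $rcv(m')_i$. As no $rcv_i$ event occurs by time $t+\fprog$, no such $ack(m')_j$ can occur by time $t+\fprog$; hence $\cC_j(t+\fprog)=\emptyset$, and in particular $\cC_j(t)=\emptyset$. Combining this with the hypothesis $\cR_j(t)\neq\emptyset$, Lemma~\ref{lemma: CG6} shows that $bcastq_j$ is nonempty in the final state at time $t$, so by the BMMB rule $j$ is (or immediately begins) broadcasting its head message $m''$, whose $bcast(m'')_j$ sits at some time $t_b\le t$. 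Because $m''\notin\cC_j(t+\fprog)=\emptyset$, the terminating event of this instance — an $ack(m'')_j$, since termination guarantees one and BMMB never aborts — occurs at some time $t_a>t+\fprog$. Thus there is a message instance $I$ originating at the $G$-neighbor $j$ of $i$ with $bcast$ at $t_b\le t$ and terminating event at $t_a>t+\fprog$.

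Next I would apply the progress bound at $i$. For each $s$ with $t+\fprog<s\le t_a$, let $\alpha'_s$ be the closed fragment spanning the time interval $[t,s]$. Its total time $s-t$ is strictly greater than $\fprog$, and $I\in connect(\alpha'_s,i)$, because $\alpha'_s\subseteq[t_b,t_a]$ (as $t_b\le t$ and $s\le t_a$) and $j$ is a $G$-neighbor of $i$. The progress bound then forces some $rcv_i$ event (from an instance in $contend(\alpha'_s,i)$) to occur by time $s$, and any such event witnesses $\cR_i(s)\neq\emptyset$.

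The main obstacle is precisely the strict inequality in condition~(a) of the progress bound: each application only certifies a receive by some $s$ strictly larger than $t+\fprog$, whereas the claim requires one by exactly $t+\fprog$. I would close this gap with a minimality argument. Applying the bound to $\alpha'_{t_a}$ yields at least one $rcv_i$ event in $[t,t_a]$; let $r^\ast$ be the time of the earliest such event, which is well defined because an admissible execution has only finitely many events in the bounded interval $[t,t_a]$. If $r^\ast>t+\fprog$, choose any $s$ with $t+\fprog<s<r^\ast$; the fragment $\alpha'_s$ still meets both hypotheses, so it forces a $rcv_i$ event by time $s<r^\ast$, contradicting the minimality of $r^\ast$. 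Hence $r^\ast\le t+\fprog$, which gives $\cR_i(t+\fprog)\neq\emptyset$ and contradicts the standing assumption. Everything apart from this endpoint subtlety is a direct chain through acknowledgment correctness, Lemma~\ref{lemma: CG6}, and the progress bound.
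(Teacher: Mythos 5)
Your proof is correct and follows essentially the same route as the paper's: assume $\cR_i(t+\fprog)=\emptyset$, show $\cC_j(t)=\emptyset$ so that Lemma~\ref{lemma: CG6} exhibits an in-flight instance at $j$ witnessing $connect(\alpha',i)\neq\emptyset$, note that condition (c) of the progress bound holds vacuously because $i$ receives nothing, and derive a contradiction. The only divergence is your handling of the strict inequality in condition (a): the paper simply picks a single $t'>t+\fprog$ at which $\cR_i(t')$ is still empty (possible because the next discrete event occurs strictly after $t+\fprog$), which replaces your family of fragments and the minimality argument with one application of the progress bound.
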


\begin{proof}
Assume for contradiction that $\cR_i(t + \fprog) = \emptyset$.
Choose $t' > t + \fprog$ to be some time strictly after $t + \fprog$,
when $\cR_i(t') = \emptyset$;
this is possible because the next discrete event after time $t +
\fprog$ must occur some positive amount of time after $t + \fprog$.

We obtain a contradiction to the progress bound.
Let $\alpha'$ be the closed execution fragment of $\alpha$ that begins
with the final state $s$ at time $t$ and ends with the final state
$s'$ at time $t'$.
We show that $\alpha'$ provides a contradiction to the progress
bound.  We verify that the three conditions in the definition of the
progress bound are all satisfied for $\alpha'$.  Condition (a), that
the total time of $\alpha'$ is strictly greater than $\fprog$,
is immediate.

Condition (b) says that $connect(\alpha',i) \neq \emptyset$.
Since $\cR_i(t) = \emptyset$, Lemma~\ref{lemma: CG4}, Part 3, implies
that $\cC_j(t) = \emptyset$.
Since $\cR_j(t) \neq \emptyset$, Lemma~\ref{lemma: CG6},
implies
that, in state $s$, $bcastq_j$ is nonempty.
Let $m'$ be the message at the head of $bcastq_j$ in state $s$.
Since $s$ is the final state at time $t$ and the protocol has $0$
delay for performing $bcast$s, it must be that the $bcast$ event for
process $j$'s instance for $m'$ occurs before the start of $\alpha'$.
Since $m' \notin \cR_i(t')$, $m'$ is not received by process $i$ by the
end of $\alpha'$.
This implies that the $ack_j(m')$ event, which terminates $j$'s
instance for $m'$, must occur after the end of $\alpha'$.
It follows that $j$'s instance for $m'$ is in $connect(\alpha', i)$,
so that $connect(\alpha',i) \neq \emptyset$, which shows Condition
(b).

Condition (c) says that no $rcv_i$ event from a message instance in
$contend(\alpha',i)$ occurs by the end of $\alpha'$.
%
%
%
%
We know that no $rcv_i$ occurs by the end of $\alpha'$, because $\cR_i(t') = \emptyset$.
So no $rcv_i$ event from a message instance in $contend(\alpha',i)$
occurs by the end of $\alpha'$, which shows Condition (c).

Thus, $\alpha'$ satisfies the combination of three conditions that
are prohibited by the progress bound assumption, yielding the needed
contradiction.
\end{proof}

The next lemma deals with the fast positive progress scenario in which  process $j$ receives some ``new'' message in $\fprog$ time.
\begin{lemma}
Let $i$ and $j$ be neighboring nodes in $G$, and suppose $t \geq 0$. Suppose that:
\begin{enumerate}
\item $\cR_i(t) \subseteq \cC_{i'}(t)$ for every neighbor $i'$ of $i$ in $G'$.
\item $\cR_j(t) - \cR_i(t) \not= \emptyset$.
\end{enumerate}

Then,  $|\cR_i(t + \fprog) | > |\cR_i(t)|$.
\label{lemma: z-neighborhood prog bound}
\end{lemma}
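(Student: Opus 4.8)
The plan is to generalize the warm-up argument of Lemma~\ref{lemma: CG5}, once more deriving a contradiction from the progress bound. Suppose for contradiction that $|\cR_i(t+\fprog)| = |\cR_i(t)|$; since $\cR_i$ is monotone nondecreasing (Part~\ref{item:lemma: t<t' => C(t) <= C(t')} of Lemma~\ref{lemma:bmmb: C(t') subset C(t'')}), this forces $\cR_i(t+\fprog) = \cR_i(t)$. As in Lemma~\ref{lemma: CG5}, I would choose a time $t' > t + \fprog$ strictly before the next $get_i$ event after $t+\fprog$, if any (possible since the next discrete event occurs a positive amount of time later), so that $\cR_i(t') = \cR_i(t)$, and let $\alpha'$ be the closed fragment running from the final state $s$ at time $t$ to the final state $s'$ at time $t'$. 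I then verify the three conditions (a), (b), (c) of the progress bound for $\alpha'$ and $i$, contradicting the assumption that they cannot simultaneously hold. Condition (a) is immediate, as $t'-t > \fprog$. The new ingredient throughout is the first hypothesis: since $i$ and $j$ are $G$-neighbors we have $(i,j)\in E \subseteq E'$, so $j$ is in particular a $G'$-neighbor of $i$, and the first hypothesis gives $\cR_i(t) \subseteq \cC_{i'}(t)$ for every $G'$-neighbor $i'$ of $i$, including $j$.

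For condition (b) I first show $bcastq_j$ is nonempty in state $s$: taking any message in $\cR_j(t) - \cR_i(t)$ (nonempty by the second hypothesis), if it lay in $\cC_j(t)$ then acknowledgment correctness would force its $G$-neighbor $i$ to have received it by time $t$, contradicting its absence from $\cR_i(t)$; hence it lies in $\cR_j(t)-\cC_j(t)$, which is $bcastq_j$ by Lemma~\ref{lemma: CG6}. Now let $m'$ be the head of $bcastq_j$ in $s$. Every message on $bcastq_j = \cR_j(t)-\cC_j(t)$ lies outside $\cC_j(t)$, and since $\cR_i(t)\subseteq \cC_j(t)$ by the first hypothesis, every such message, in particular $m'$, lies outside $\cR_i(t)$. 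Because BMMB broadcasts the head of the queue with zero delay, $bcast(m')_j$ occurs before the start of $\alpha'$. Moreover $ack(m')_j$ cannot occur by time $t'$: otherwise acknowledgment correctness would produce a $rcv(m')_i$ by time $t'$, forcing the new message $m'$ into $\cR_i(t')$ and contradicting $\cR_i(t') = \cR_i(t)$. Thus $j$'s instance of $m'$ spans $\alpha'$ and, since $(j,i)\in E$, lies in $connect(\alpha',i)$, establishing condition (b).

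I expect condition (c) to be the main obstacle. Unlike the warm-up, where $\cR_i(t')=\emptyset$ ruled out every receive outright, here $i$ may receive \emph{duplicates} of messages it already holds, and I must argue that these cannot arise from contending instances. Suppose some $rcv_i$ of a message $m''$ from an instance in $contend(\alpha',i)$ occurred by the end of $\alpha'$; by receive correctness its originator $i'$ satisfies $(i',i)\in E'$, so $i'$ is a $G'$-neighbor of $i$. If $m'' \in \cR_i(t)$, then the first hypothesis gives $m'' \in \cC_{i'}(t)$, so $ack(m'')_{i'}$, the terminating event of $i'$'s unique instance of $m''$ and hence of the very instance that caused this $rcv_i$, occurs by time $t$ and thus precedes the start of $\alpha'$, contradicting membership in $contend(\alpha',i)$. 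Otherwise $m'' \notin \cR_i(t)$, but then the $rcv(m'')_i$ by time $t'$ forces $m'' \in \cR_i(t')$, again contradicting $\cR_i(t')=\cR_i(t)$. Either way no such receive exists, so condition (c) holds. With (a), (b), and (c) all holding for $\alpha'$ and $i$, the progress bound is violated, which completes the contradiction and proves the lemma.
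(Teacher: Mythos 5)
Your proof is correct and follows essentially the same route as the paper's: the same contradiction against the progress bound, the same choice of $t'$ and fragment $\alpha'$, the head of $bcastq_j$ as the witness for $connect(\alpha',i)\neq\emptyset$, and the same case split on whether $m''\in\cR_i(t)$ to rule out contending receives. The only cosmetic differences are that you re-derive $\cC_j(t)\subseteq\cR_i(t)$ from acknowledgment correctness rather than citing Part 3 of Lemma~\ref{lemma:bmmb: C(t') subset C(t'')}, and you make explicit the (implicitly used) fact that $G$-neighbors are also $G'$-neighbors.
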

This says that, if every message that $i$ has already received is already completed at all of $i$'s
neighbors in $G'$ and some neighbor $j$ of $i$ in $G$ has received some message that $i$ hasn't yet received, then $i$ will
receive a new message within $\fprog$ time.

\begin{proof}
Assume for contradiction that
$\cR_i(t) \subseteq \cC_{i'}(t)$ for every neighbor $i'$ of $i$ in $G'$,
that $\cR_j(t) - \cR_i(t) \neq \emptyset$, and
that $|\cR_i(t+\fprog)| = |\cR_i(t)|$.
Then it must be that $\cR_i(t+\fprog) = \cR_i(t)$.
Choose $t' > t + \fprog$ to be some time strictly after $t + \fprog$,
when $\cR_i(t') = \cR_i(t)$; this is possible because the next
discrete event after time $t + \fprog$ must occur some positive
amount of time after $t + \fprog$.

We obtain a contradiction to the progress bound.
Let $\alpha'$ be the closed execution fragment of $\alpha$ that begins
with the final state $s$ at time $t$ and ends with the final state
$s'$ at time $t'$.  We verify that the three conditions in the
definition of the progress bound are all satisfied for $\alpha'$.
\begin{itemize}
\item
Condition (a):  The total time of $\alpha'$ is strictly greater
than $\fprog$.  \\
This is immediate.

\item
Condition (b):  $connect(\alpha',i) \neq \emptyset$. \\
Since $\cR_j(t) - \cR_i(t) \neq \emptyset$ and $\cC_j(t) \subseteq \cR_i(t)$
(by Part
\ref{item:lemma: i and j neigh t'<=t C_i(t') subseteq R_j(t)}
of Lemma \ref{lemma:bmmb: C(t') subset C(t'')} with $i=i$, $i'=j$),
we have
$\cR_j(t) - \cC_j(t) \neq \emptyset$.

Then Lemma~\ref{lemma: CG6},
implies that, in state $s$,
$\bcastq_j$ is nonempty.
Let $m'$ be the message at the head of $\bcastq_j$ in state $s$.
Since $s$ is the final state at time $t$ and the protocol has $0$
delay for performing $bcast$s, it must be that the $bcast$ event for
process $j$'s instance for $m'$ occurs before the start of $\alpha'$.

Also, we know that $m' \notin \cR_i(t)$, because $m' \notin \cC_j(t)$ and
$\cR_i(t) \subseteq \cC_j(t)$.

Since $\cR_i(t') = \cR_i(t)$, we also know that $m' \notin \cR_i(t')$.
Therefore, $m'$ is not received by process $i$ by the end of
$\alpha'$.
This implies that the $ack_j(m')$ event, which terminates $j$'s
instance for $m'$, must occur after the end of $\alpha'$.
It follows that $j$'s instance for $m'$ is in $connect(\alpha', i)$,
so that $connect(\alpha',i) \neq \emptyset$.

\item
Condition (c):  No $rcv_i$ event from a message instance in
$contend(\alpha',i)$ occurs by the end of $\alpha'$.
%
%
%
We claim that, if a message $m''$ has an instance in
$contend(\alpha',i)$, then $m'' \notin \cR_i(t)$.
To see this, let $i'$ be a neighbor of $i$ in $G'$ originating an instance of
$m''$ in $contend(\alpha',i)$.
If $m'' \in \cR_i(t)$, then by hypothesis 1 (of the lemma), also $m'' \in \cC_{i'}(t)$.
That means that the $ack$ event of node $i'$'s instance of $m''$
occurs before the start of $\alpha'$, which implies that the instance
is not in $contend(\alpha,i)$.


With this claim, we can complete the proof for Condition (c).
Suppose for contradiction that a $rcv_i(m'')$ event from some message
instance in $contend(\alpha,i)$ occurs by the end of $\alpha'$.
Using the first claim above, $m'' \in \cR_i(t')$.
But by the second claim above, $m'' \notin \cR_i(t)$.
But we have assumed that $\cR_i(t') = \cR_i(t)$, which yields a
contradiction.
\end{itemize}

Thus, $\alpha'$ satisfies the combination of three conditions that
are prohibited by the progress bound assumption, yielding the needed
contradiction.
\end{proof}

The next lemma deals with the slow positive progress scenario in which  process $j$ is guaranteed to receive some ``new'' message in $z \cdot \fack$ time (for some positive integer $z$).

\begin{lemma}
Fix some time $t\geq 0$. Suppose that:
\begin{enumerate}

\item
$|\cC_{j}(t)|\geq \ell-1$.

\item
$\cC_{j}(t) \subseteq \cC_{j'}(t)$ for every node $j' \in N_G^z(j)$.

\item
There exists some $j'' \in N_G^z(j)$ such that $|\cR_{j''}(t) |\geq \ell$.
\label{item: condition: there exists some j'' s t Rj''(t) geq ell}

\end{enumerate}
Then $|\cR_j(t + z \fack)| \geq \ell$.
\label{lemma: progress in z cdot Fack}
\end{lemma}

\noindent This says that, if
(1) $j$ completes at least $\ell-1$ messages by time $t$;
(2) every message that $j$ has completed by time $t$ is also completed at all of $j$'s
neighbors in $G^z$ by time $t$; and
(3) there exists at least one neighbor of $j$ in $G^z$ that receives at least $\ell$ messages by that time,
then $j$ receives
at least $\ell$ messages by time $t+ z\cdot \fack$~.

\begin{proof}
We prove this lemma by induction on $z$.
For the base, $z= 0$, the statement trivially follows, since $N^0(j)=\{j\}$, which implies together with condition
\ref{item: condition: there exists some j'' s t Rj''(t) geq ell} that $|\cR_{j}(t) |\geq \ell$.
For the inductive step, we assume $z \geq 1$.
We assume the lemma statement for $z'<z$ and prove it for $z$.
If $|\cR_j(t)| \geq \ell$, then
by Part \ref{item:lemma: t<t' => C(t) <= C(t')}
of Lemma \ref{lemma:bmmb: C(t') subset C(t'')},
$|\cR_j(t+\fack)| \geq \ell$ and we are done.

If $\cC_{j}(t+\fack) \not= \cC_{j}(t)$, then by  assumption 1
and Part
\ref{item:lemma: t<t' => C(t) <= C(t')}
of Lemma \ref{lemma:bmmb: C(t') subset C(t'')},
$|\cC_{j}(t+\fack)|\geq \ell$,
which implies
by Part \ref{item:lemma: C_i(t) subseteq R_i(t)}
of Lemma \ref{lemma:bmmb: C(t') subset C(t'')}
that $|\cR_{j}(t+z\cdot \fack)|\geq \ell$, as needed.

It remains to consider the case where $|\cR_j(t)| = \ell-1$ and  $\cC_{j}(t+\fack) = \cC_{j}(t)$.
Let $j''$ be a closest neighbor of $j$ such that $|\cR_{j''}(t)| \geq \ell$.
Note that, $j''$ must be at distance at least 1 from $j$ (by the assumptions for this case) and $j''$ must be at distance at most $z$ from $j$ (by assumption 3 of the lemma).
Moreover, by the first two assumptions of the lemma, it follow that $|\cC_{j''}(t)| \geq \ell-1$.
Combining this inequality with  $|\cR_{j''}(t)| \geq \ell$ and
Corollary \ref{coro:bmmb: ack and prog progress},
we get that
\begin{equation}
|\cC_{j''}(t+\fack)| \geq \ell.
\label{ineq:lemma:|C_j''(t + Ffac)| geq ell}
\end{equation}

Let $j^*$ be the next-closer node on a shortest path in $G$ from $j''$ to $j$.
We now apply the inductive hypothesis
for $z' = z-1$ and time $t'=t+\fack$.
Note that, $j^*\in N^{z'}_G(j)$.
To do this, we show that the three assumptions of the lemma indeed hold for $z'$ and $t'$.
The first assumption of the lemma holds as
$$|\cC_{j}(t')|\geq |\cC_{j}(t)|\geq \ell-1,$$
where the first inequality holds since
$\cC_{j}(t) \subseteq \cC_{j}(t')$
(by Part \ref{item:lemma: t<t' => C(t) <= C(t')}
of Lemma \ref{lemma:bmmb: C(t') subset C(t'')},
with $t\leq t'$);
and the second inequality holds by assumption 1 for $z$ and $t$.
The second assumption of the lemma holds as,
$$
\cC_{j}(t') \subseteq \cC_{j'}(t) \subseteq \cC_{j'}(t'), \mbox{ for every node } j' \in N_G^{z'},
$$
where the first inequality holds since, in this case, $\cC_{j}(t') = \cC_{j}(t)$ and $\cC_{j}(t) \subseteq \cC_{j'}(t)$ (by assumption 2 of the lemma for $z>z'$ and $t$);
and the second inequality holds by Part \ref{item:lemma: t<t' => C(t) <= C(t')}
of Lemma \ref{lemma:bmmb: C(t') subset C(t'')} with $t\leq t'$.
We next argue that the third assumption holds as well.
Specifically, we claim that $j^*$, in particular, satisfies this assumption.
That is,
$j^*\in N_G^{z'}(j)$ and $|\cR_{j^*}(t')| \geq \ell$.
The first statement holds, since $z'=z-1$ and $d_G(j,j^*) < d_G(j,j'')\leq z$.
The second statement holds as,
$$
|\cR_{j^*}(t')| \geq |\cC_{j''}(t')|\geq \ell,
$$
where the first inequality holds,
since $\cC_{j''}(t') \subseteq \cR_{j^*}(t')$
(by Part \ref{item:lemma: i and j neigh t'<=t C_i(t') subseteq R_j(t)}
of Lemma \ref{lemma:bmmb: C(t') subset C(t'')})
and the second inequality holds
by combining together Inequality (\ref{ineq:lemma:|C_j''(t + Ffac)| geq ell}) with $t'=t+\fack$.

Having shown the three assumptions, we can now invoke the inductive hypothesis for $z'$ and $t'$.
We have $|\cR_{j}(t'+z'\cdot \fack)| \geq \ell$.
In addition, $t'+z'\cdot \fack \leq t+z\cdot \fack$, since $t'=t+\fack$ and $z'\leq z-1$.
Combining these two inequalities together with
Part \ref{item:lemma: t<t' => C(t) <= C(t')}
of Lemma \ref{lemma:bmmb: C(t') subset C(t'')},
we get that
$|\cR_{j}(t+z\cdot \fack)| \geq \ell$, as needed.
The lemma follows.
\end{proof}

\subsubsection{The Key Lemma}
\label{subsec: The Key Lemma}

We continue to assume all the context we established earlier in Subsection \ref{subsec:Proof Prelim}.
The lemma below summarizes some helpful complexity bounds.

\begin{lemma}[``Complexity Bounds'']
\label{lemma:bmmb: t d ell values}
\label{lemma: t-inequalities}
The following hold for the $t_{d,\ell}$ values:
  \begin{enumerate}
  \item For $d'\leq d''$, $t_{d',\ell} \leq t_{d'',\ell} $ (monotonically increasing in terms of $d$).
  \label{item:lemma: t dl monotonically increasing}

  \item For $\ell\geq 2$, $d\geq 1$, $t_{d+r,\ell-1} + r\cdot \fack \leq t_{d,\ell}$.
  \label{item:lemma com-bounds r fack}

  \item For $\ell\geq 2$, $d\geq 1$, $t_{d+r,\ell-1} + \fack \leq t_{d-1,\ell}$.
  \label{item:lemma com-bounds fack t d+r ell-1 + Fack leq t d-1 ell}

  \item For $\ell\geq 1$, $d\geq 1$, $t_{d-1,\ell} + \fprog = t_{d,\ell}$.
  \label{item:lemma com-bounds fprog}

  \item For $\ell > 1$, $\ell\cdot \fack \leq t_{0,\ell} $.
  \label{item:t- t0 + ell fack leq t 0 ell}

  \end{enumerate}
\end{lemma}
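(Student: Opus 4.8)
The plan is to treat this lemma as a purely arithmetic fact about the recursively defined threshold values $t_{d,\ell}$: nothing here depends on the execution $\alpha$ or on the MAC-layer properties, so every part should fall out by unfolding the definition and doing bookkeeping. The organizing step I would take first is Part~\ref{item:lemma com-bounds fprog}. Since it is stated as an \emph{equality} $t_{d,\ell} = t_{d-1,\ell} + \fprog$, I read it as the defining step in the $d$ direction, and I would use it to establish the semi-closed form $t_{d,\ell} = t_{0,\ell} + d\cdot\fprog$ by a one-line induction on $d$. This removes $d$ from all subsequent bookkeeping, reducing every remaining claim to a statement about the base values $t_{0,\ell}$ alone. (If the actual definition is instead a max-type recurrence, then Part~\ref{item:lemma com-bounds fprog} additionally requires showing the $\fprog$-branch dominates; I would dispatch that first, but the equality phrasing suggests it is definitional.)

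With the form $t_{d,\ell} = t_{0,\ell} + d\cdot\fprog$ in hand, Parts~\ref{item:lemma: t dl monotonically increasing} and~\ref{item:lemma com-bounds fprog} are immediate. Part~\ref{item:lemma com-bounds fprog} is the defining recurrence, and for the monotonicity in Part~\ref{item:lemma: t dl monotonically increasing} we have $t_{d'',\ell} - t_{d',\ell} = (d''-d')\cdot\fprog \geq 0$ whenever $d' \leq d''$, using $\fprog > 0$. I would prove Part~\ref{item:lemma: t dl monotonically increasing} early, since the later parts implicitly compare thresholds at shifted distances.

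For Parts~\ref{item:lemma com-bounds r fack}, \ref{item:lemma com-bounds fack t d+r ell-1 + Fack leq t d-1 ell}, and~\ref{item:t- t0 + ell fack leq t 0 ell}, I would substitute the semi-closed form and cancel the common $\fprog$ terms, turning each claim into a relation purely between $t_{0,\ell}$ and $t_{0,\ell-1}$. Concretely, Part~\ref{item:lemma com-bounds r fack} reduces to $t_{0,\ell} \geq t_{0,\ell-1} + r(\fprog+\fack)$, and Part~\ref{item:lemma com-bounds fack t d+r ell-1 + Fack leq t d-1 ell} reduces to $t_{0,\ell} \geq t_{0,\ell-1} + (r+1)\fprog + \fack$. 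Each is then checked against the per-message increment built into the definition of $t_{0,\ell}$, by induction on $\ell$ down to the base $\ell = 1$. Part~\ref{item:t- t0 + ell fack leq t 0 ell} follows in the same style, by summing the $\ell$-increments and discarding the nonnegative $\fprog$ contributions to isolate the $\ell\cdot\fack$ lower bound.

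The main obstacle I anticipate is not any single calculation but making the two increment conditions from Parts~\ref{item:lemma com-bounds r fack} and~\ref{item:lemma com-bounds fack t d+r ell-1 + Fack leq t d-1 ell} hold simultaneously for all admissible $r$. They are not comparable in general: $r(\fprog+\fack) \geq (r+1)\fprog + \fack$ holds iff $(r-1)\fack \geq \fprog$, which can fail at $r = 1$. So neither reduced inequality implies the other, and the per-message increment in the definition of $t_{0,\ell}$ must be shown to dominate the \emph{maximum} of the two. I expect this step to use $r \geq 1$ essentially, and possibly the modeling assumption $\fprog \leq \fack$, and I would make each such use explicit rather than relying on ``clearly,'' since a dropped factor of $r$ or a reversed sign would slip through exactly here.
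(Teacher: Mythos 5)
Your proposal is correct and takes essentially the same approach as the paper: the paper's entire proof of this lemma is the single line ``By simple algebraic calculations,'' and your plan is exactly to carry out those calculations by unfolding the definition of $t_{d,\ell}$, using the $\fprog$-recurrence in $d$ to reduce everything to increment conditions on $t_{0,\ell}$. Your additional observation that the increments required by Parts 2 and 3 are incomparable (so the per-$\ell$ increment must dominate their maximum, using $r\geq 1$) is a correct and worthwhile detail that the paper simply omits.
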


\begin{proof}
By simple algebraic calculations.
\end{proof}

To prove the key lemma, we show a double induction
for $\ell$ as an ``outer'' induction and for distance $d$ as an ``inner'' induction.
To warm up, let us begin with two special cases.
The first (Lemma \ref{lemma: ell=1} below) will be used in the base case for $\ell=1$ for the outer induction of the inductive proof
in the main lemma.
The second (see Lemma \ref{lemma: d=0}) will be used in the base case for $d=0$ for the inner induction of the inductive proof
in the main lemma.

\begin{lemma}
\label{lemma: ell=1}
Let $j$ be a node at distance $d=d_G(i_0,j)$ from $i_0$.
Then:
\begin{enumerate}
\item
$\cR_j(t_{d,1}) \neq \emptyset$.
\item
$\cC_j(t_{d,1} + \fack) \neq \emptyset$.
\end{enumerate}
\end{lemma}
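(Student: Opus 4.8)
The plan is to prove Part~1 by induction on the distance $d = d_G(i_0,j)$, and then to derive Part~2 as an immediate consequence of Part~1 together with Part~3 of Lemma~\ref{lemma: CG2}. The whole argument is essentially an assembly of the ``first message'' progress lemma (Lemma~\ref{lemma: CG5}) with the time recurrence $t_{d-1,\ell} + \fprog = t_{d,\ell}$ from Part~\ref{item:lemma com-bounds fprog} of the Complexity Bounds lemma, so no new machinery is needed; the only care required is clean bookkeeping of the time values.

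For Part~1, the base case is $d = 0$, i.e.\ $j = i_0$. By our standing assumption an $arrive(m)_{i_0}$ event occurs at time $t_0 = 0$, so by the definition of $get$ events a $get(m)_{i_0}$ event occurs at time $0$; hence $m \in \cR_{i_0}(0) \subseteq \cR_{i_0}(t_{0,1})$, since $t_{0,1} \geq 0$. This gives $\cR_{i_0}(t_{0,1}) \neq \emptyset$. For the inductive step, fix $d \geq 1$ and assume the claim holds for all nodes at distance $d-1$ from $i_0$. Let $j$ be at distance $d$, and let $j'$ be the neighbor of $j$ in $G$ lying one step closer to $i_0$ on a shortest path, so $d_G(i_0,j') = d-1$. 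By the inductive hypothesis $\cR_{j'}(t_{d-1,1}) \neq \emptyset$. Applying Lemma~\ref{lemma: CG5} with the roles $i := j$, $j := j'$, and $t := t_{d-1,1}$ (valid since $j$ and $j'$ are $G$-neighbors and $t_{d-1,1} \geq 0$), we obtain $\cR_j(t_{d-1,1} + \fprog) \neq \emptyset$. By Part~\ref{item:lemma com-bounds fprog} of Lemma~\ref{lemma: t-inequalities} we have $t_{d-1,1} + \fprog = t_{d,1}$, so $\cR_j(t_{d,1}) \neq \emptyset$, completing the induction.

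For Part~2, note that Part~1 gives $|\cR_j(t_{d,1})| \geq 1$. Applying Part~3 of Lemma~\ref{lemma: CG2} with $i := j$, $t := t_{d,1}$, and $k := 1$ yields $|\cC_j(t_{d,1} + \fack)| \geq 1$, i.e.\ $\cC_j(t_{d,1} + \fack) \neq \emptyset$, as required. I do not expect a genuine obstacle here: this is the ``warm-up'' single-message case flagged in the footnote, and every nontrivial step has already been isolated in the earlier lemmas. The only points that must not be glossed over are (i) invoking the $arrive$ event to seed the base case, and (ii) matching the $\fprog$-increment produced by Lemma~\ref{lemma: CG5} exactly against the recurrence for $t_{d,1}$ so that the induction closes on the intended time value.
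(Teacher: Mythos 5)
Your proof is correct and follows essentially the same route as the paper: induction on $d$ for Part~1, seeding the base case with the $arrive(m)_{i_0}$ event, closing the inductive step via Lemma~\ref{lemma: CG5} together with the recurrence $t_{d-1,1}+\fprog=t_{d,1}$, and deriving Part~2 from Part~3 of Lemma~\ref{lemma: CG2} with $k=1$. No substantive differences from the paper's argument.
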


\begin{proof}
\begin{enumerate}
\item
For Part 1, we use induction on $d$.
For the base case, consider $d=0$.
Then $j = i_0$ and $t_{d,1} = t_{0,1} = 0$.
Since $m \in \cR_{i_0}(0)$, we see that $\cR_j(t_{d,1}) \neq \emptyset$,
as needed.

For the inductive step, assume Part 1 for $d-1$ and prove it
for $d$.
Let $j'$ be the predecessor of $j$ on a shortest path in $G$ from $i_0$ to
$j$; then $d_G(i_0,j') = d-1$.
By inductive hypothesis, we know that
$\cR_{j'}(t_{d-1,1}) \neq \emptyset$.
Then Lemma~\ref{lemma: CG5} implies that $\cR_j(t_{d-1,1} + \fprog) \neq
\emptyset$.
Since $t_{d,1} = t_{d-1,1} + \fprog$, this implies that
$\cR_j(t_{d,1}) \neq \emptyset$, as needed.

\item
Part 2 follows from Part 1 using Lemma~\ref{lemma: CG3}, Part 3,
applied with $k=1$.
\end{enumerate}
\end{proof}

\begin{lemma}
\label{lemma: d=0}
Let $\ell \geq 1$.
Then:
\begin{enumerate}
\item
$m \in \cR_{i_0}(t_{0,\ell})$.
\item
Either $m \in \cC_{i_0}(t_{0,\ell} + \fack)$ or $|\cC_{i_0}(t_{0,\ell} + \fack)| \geq \ell$.
\end{enumerate}
\end{lemma}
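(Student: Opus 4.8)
The plan is to derive both parts from the single observation that $m$ enters $i_0$'s broadcast queue at the very start of the execution and is processed FIFO from there. Part~1 is then immediate: since $arrive(m)_{i_0}$ occurs at time $0$, the event $get(m)_{i_0}$ occurs at time $0$, so $m \in \cR_{i_0}(0) \subseteq \cR_{i_0}(t_{0,\ell})$ by monotonicity (Part~\ref{item:lemma: t<t' => C(t) <= C(t')} of Lemma~\ref{lemma:bmmb: C(t') subset C(t'')}), using $t_{0,\ell} \geq 0$. For Part~2 the key idea is a dichotomy on how far back $m$ sits in $\bcastq_{i_0}$ at the final state $s_0$ at time $0$. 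If $m$ has already been acknowledged by time $0$, then $m \in \cC_{i_0}(0) \subseteq \cC_{i_0}(t_{0,\ell}+\fack)$ and the first disjunct holds trivially. Otherwise, by Lemma~\ref{lemma: CG6}, $\bcastq_{i_0}$ in state $s_0$ equals $\cR_{i_0}(0) - \cC_{i_0}(0)$, which contains $m$, so $m$ occupies some position $p \geq 1$. I would then split on whether $p \leq \ell$ or $p > \ell$, using Part~1 of Lemma~\ref{lemma: CG3} (a message at position $k$ of $\bcastq_{i_0}$ at time $0$ lies in $\cC_{i_0}(k\fack)$) together with the calibration $\ell\cdot\fack \leq t_{0,\ell}$ from Part~\ref{item:t- t0 + ell fack leq t 0 ell} of Lemma~\ref{lemma: t-inequalities}.

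In the branch $p \leq \ell$, Part~1 of Lemma~\ref{lemma: CG3} gives $m \in \cC_{i_0}(p\fack) \subseteq \cC_{i_0}(\ell\fack) \subseteq \cC_{i_0}(t_{0,\ell}+\fack)$, where the last containment uses $\ell\fack \leq t_{0,\ell}+\fack$ and monotonicity (Part~\ref{item:lemma: t<t' => C(t) <= C(t')} of Lemma~\ref{lemma:bmmb: C(t') subset C(t'')}); this yields the first disjunct. In the branch $p > \ell$, the messages at positions $1, \dots, \ell$ of $\bcastq_{i_0}$ are $\ell$ distinct messages (the protocol never enqueues a duplicate), all strictly ahead of $m$; applying Part~1 of Lemma~\ref{lemma: CG3} to each shows the message at position $k$ lies in $\cC_{i_0}(k\fack) \subseteq \cC_{i_0}(\ell\fack) \subseteq \cC_{i_0}(t_{0,\ell}+\fack)$, so $|\cC_{i_0}(t_{0,\ell}+\fack)| \geq \ell$, giving the second disjunct. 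The boundary case $\ell=1$, where Part~\ref{item:t- t0 + ell fack leq t 0 ell} of Lemma~\ref{lemma: t-inequalities} does not apply, is checked directly from $t_{0,1}=0$: here $\ell\fack = \fack = t_{0,1}+\fack$, so both containments still hold.

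The main obstacle is recognizing that the disjunction in Part~2 mirrors exactly this dichotomy on $m$'s queue position: when $m$ is near the front it is itself acknowledged within the window, while when it is buried behind at least $\ell$ messages those $\ell$ messages are forced to complete by time $\ell\cdot\fack$. The threshold $p=\ell$ is not arbitrary but is dictated precisely by the bound $\ell\cdot\fack \leq t_{0,\ell}$, so the substance of the argument is confirming that both branches land inside $[0,\,t_{0,\ell}+\fack]$. The only remaining delicacy is the role of the extra additive $\fack$ (needed when $\ell=1$, since $t_{0,1}=0$ leaves no room to acknowledge the head message) and the trivial possibility that $m$ is already acknowledged at time $0$; both are dispatched immediately once the position framework and Lemma~\ref{lemma: CG6} are in place.
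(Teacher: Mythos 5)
Your proof is correct and follows essentially the same route as the paper's: Part~1 by monotonicity from $m \in \cR_{i_0}(0)$, and Part~2 by the dichotomy on whether $m$ is already acknowledged at time $0$ and, if not, on whether its position in $\bcastq_{i_0}$ is at most $\ell$, using Lemma~\ref{lemma: CG2} and the bound $\ell\fack \leq t_{0,\ell}+\fack$. The only cosmetic difference is that in the $p>\ell$ branch you inline the argument behind Part~2 of Lemma~\ref{lemma: CG2} (applying Part~1 to each of the first $\ell$ distinct queued messages) where the paper simply cites that part directly.
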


\begin{proof}
Since $m \in \cR_{i_0}(0)$, and $0 \leq t_{0,\ell}$, we have $m \in
\cR_j(t_{0,\ell})$, which yields Part 1.

For Part 2, if $m \in \cC_{i_0}(0)$, then clearly $m \in \cC_{i_0}(t_{0,\ell} + \fack)$, which suffices.
So suppose that $m \notin \cC_{i_0}(0)$.
Then $m \in \cR_{i_0}(0) - \cC_{i_0}(0)$, so $m$ is on $bcastq_{i_0}$
in the final state $s_0$ at time $t=0$.

If in state $s_0$, the position of $m$ on $bcastq_{i_0}$ is $\leq
\ell$, then Lemma~\ref{lemma: CG2}, Part 1, implies that
$m \in \cC_{i_0}(\ell \fack)$.
By Part \ref{item:t- t0 + ell fack leq t 0 ell} of Lemma~\ref{lemma: t-inequalities}, $ \ell \fack \leq t_{0,\ell} + \fack$,
which implies together with
Part \ref{item:lemma: t<t' => C(t) <= C(t')}
of Lemma \ref{lemma:bmmb: C(t') subset C(t'')}
that $m \in \cC_{i_0}(t_{0,\ell} + \fack)$,
which is sufficient to establish the claim.

On the other hand, if in state $s_0$, the position of $m$ on
$bcastq_{i_0}$ is strictly greater than $\ell$, then we apply
Lemma~\ref{lemma: CG2}, Part 2, to conclude that
$|\cC_{i_0}( \ell \fack)| \geq \ell$.
That implies that $|\cC_{i_0}(t_{0,\ell} + \fack)| \geq \ell$, which
again suffices.
\end{proof}

And now, for the main lemma.

\begin{lemma}\label{lemma:bmmblemma-full}
  Let $j$ be a node at distance $d=d_G(i_0,j)$ from $i_0$ in $G$.
  Let $\ell$ be any positive integer. Then:
  \begin{enumerate}
  \item  Either $m\in \cR_j(t_{d,\ell})$ or $|\cR_j(t_{d,\ell})|\geq \ell$.

  \item  Either $m\in \cC_j(t_{d,\ell}+\fack)$ or $|\cC_j(t_{d,\ell}+\fack)|\geq \ell$.

  \end{enumerate}

\label{lem:bmmb:time}
\end{lemma}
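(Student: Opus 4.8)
The plan is to prove both parts simultaneously by a double induction, with $\ell$ as the outer variable and the distance $d$ as the inner variable, exactly as foreshadowed in the text. The outer base case $\ell = 1$ is already supplied by Lemma~\ref{lemma: ell=1}: its two conclusions $\cR_j(t_{d,1}) \neq \emptyset$ and $\cC_j(t_{d,1}+\fack)\neq\emptyset$ are precisely the disjuncts ``$|\cR_j|\geq 1$'' and ``$|\cC_j|\geq 1$'' needed here. The inner base case $d=0$ (for arbitrary $\ell$) is supplied by Lemma~\ref{lemma: d=0}, whose two parts give the first disjunct $m\in\cR_{i_0}(t_{0,\ell})$ of Part~1 and the full disjunction of Part~2 at $i_0$. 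So the work is entirely in the inner inductive step, where $d\geq 1$ and $\ell\geq 2$, and where I may assume the full statement for $(d-1,\ell)$ (inner hypothesis) and for all distances at level $\ell-1$ (outer hypothesis).

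For Part~1 at $(d,\ell)$, I would fix the predecessor $j'$ of $j$ on a shortest $G$-path from $i_0$, so that $d_G(i_0,j')=d-1$ and $t_{d,\ell}=t_{d-1,\ell}+\fprog$ by Part~\ref{item:lemma com-bounds fprog} of Lemma~\ref{lemma: t-inequalities}. If $m\in\cR_j(t_{d,\ell})$ we are done, so assume not; the goal becomes $|\cR_j(t_{d,\ell})|\geq\ell$. Here I expect to combine two ingredients. First, the outer hypothesis at level $\ell-1$, applied to $j$ itself, already forces $j$ to hold $\ell-1$ messages (received, and via Part~2 at $\ell-1$ also completed) by a time comfortably before $t_{d,\ell}$, since $m\notin\cR_j$ kills the ``$m$'' disjunct. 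Second, to gain the one additional message that pushes $j$ to $\ell$, I would feed the situation into one of the two positive-progress lemmas: the fast single-hop lemma (Lemma~\ref{lemma: z-neighborhood prog bound}) when $j$'s messages are already completed at all of $j$'s $G'$-neighbors, and otherwise the slow $r$-hop lemma (Lemma~\ref{lemma: progress in z cdot Fack}) with $z=r$, which converts ``some node of $N_G^r(j)$ holds $\ell$ messages'' into ``$j$ holds $\ell$ messages'' after an additional $r\cdot\fack$ time. The node witnessing condition~(3) of the slow lemma is produced from the inner hypothesis along the path toward $i_0$ (the delayed message $m$ guarantees a closer node carrying a full backlog of $\ell$ messages), and condition~(2) --- that $j$'s completed set is contained in that of every $N_G^r(j)$-neighbor --- is what the outer hypothesis at $\ell-1$ supplies via the neighbor-containment facts of Lemma~\ref{lemma:bmmb: C(t') subset C(t'')}. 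The inequalities of Lemma~\ref{lemma: t-inequalities} (Parts~\ref{item:lemma com-bounds r fack}--\ref{item:lemma com-bounds fprog}) are what make the $r\cdot\fack$ and $\fprog$ increments fit inside the budget $t_{d,\ell}$.

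Part~2 at $(d,\ell)$ should then follow cleanly from Part~1 together with the outer hypothesis, with no further appeal to the network. Reading Part~2 of the outer hypothesis at $(d,\ell-1)$ at the slightly later time $t_{d,\ell}$ --- legitimate because $t_{d,\ell-1}+\fack\leq t_{d,\ell}$, which I would extract from Parts~\ref{item:lemma: t dl monotonically increasing} and \ref{item:lemma com-bounds fack t d+r ell-1 + Fack leq t d-1 ell} of Lemma~\ref{lemma: t-inequalities} --- gives that either $m\in\cC_j(t_{d,\ell})$ or $|\cC_j(t_{d,\ell})|\geq\ell-1$. In the first case monotonicity (Part~\ref{item:lemma: t<t' => C(t) <= C(t')} of Lemma~\ref{lemma:bmmb: C(t') subset C(t'')}) finishes immediately. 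In the second case I combine $|\cC_j(t_{d,\ell})|\geq\ell-1$ with Part~1: if $|\cR_j(t_{d,\ell})|\geq\ell$, then Corollary~\ref{coro:bmmb: ack and prog progress} upgrades this to $|\cC_j(t_{d,\ell}+\fack)|\geq\ell$; the only remaining possibility, $m\in\cR_j(t_{d,\ell})$ with $m\notin\cC_j(t_{d,\ell})$, is impossible, since it would place $m$ in $\cR_j\setminus\cC_j$ on top of the $\ell-1$ already-completed messages, forcing $|\cR_j(t_{d,\ell})|\geq\ell$ after all.

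The main obstacle is the inductive step for Part~1. The genuinely delicate bookkeeping is twofold: (i) choosing the exact time at which to invoke each progress lemma so that its conclusion lands at or before $t_{d,\ell}$, where the specific algebra of Lemma~\ref{lemma: t-inequalities} is indispensable; and (ii) discharging the completion-type preconditions of the progress lemmas (condition~(1) of the fast lemma and condition~(2) of the slow lemma), which assert that \emph{specific} messages are completed at \emph{all} relevant neighbors and therefore cannot be read off from cardinality bounds --- they must be routed through the outer hypothesis at level $\ell-1$ and the neighbor-containment parts of Lemma~\ref{lemma:bmmb: C(t') subset C(t'')}. Everything else (the base cases, all of Part~2, and the monotonicity manipulations) is routine by comparison.
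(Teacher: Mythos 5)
Your overall architecture --- the double induction with $\ell$ outer and $d$ inner, the two base cases discharged by Lemmas~\ref{lemma: ell=1} and~\ref{lemma: d=0}, and the derivation of Part~2 from Part~1 plus the outer hypothesis and Corollary~\ref{coro:bmmb: ack and prog progress} --- coincides with the paper's, and your Part~2 argument is essentially the paper's. The gap is in the one place you yourself flag as the main obstacle: the inductive step for Part~1, specifically the branch in which you invoke the slow progress lemma (Lemma~\ref{lemma: progress in z cdot Fack}). You propose to apply it with $z=r$ and to discharge its condition~(2) --- that $\cC_j(t)\subseteq\cC_{j'}(t)$ for \emph{every} $j'\in N^z_G(j)$ --- via ``the outer hypothesis at level $\ell-1$ and the neighbor-containment facts of Lemma~\ref{lemma:bmmb: C(t') subset C(t'')}.'' Neither source can deliver this: the outer hypothesis yields only cardinality lower bounds on the neighbors' completed sets, and the containment parts of Lemma~\ref{lemma:bmmb: C(t') subset C(t'')} relate $\cC$ of one node to $\cR$ of another, never $\cC_j$ to $\cC_{j'}$. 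Worse, the branch in which you want the slow lemma is precisely the one where some neighbor has \emph{not} completed everything $j$ has, so the needed hypothesis is in tension with the case you are in. There is also a timing problem: with $z=r$ the slow lemma delivers its conclusion only at $t^*+r\fack$, which overshoots the budget; Part~\ref{item:lemma com-bounds r fack} of Lemma~\ref{lemma:bmmb: t d ell values} only guarantees $t^*+(r-1)\fack\leq t_{d,\ell}$ for $t^*=t_{d+r,\ell-1}+\fack$.

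The missing device is a minimality argument. The paper splits cases on whether some $j'\in N^r_G(j)$ has $\cC_{j'}(t^*)\neq\cC_j(t^*)$. If so, it takes $j''$ to be a \emph{closest} such node and $j^*$ the next-closer node on a shortest path from $j''$ to $j$, and applies the slow lemma with $z=d_G(j,j^*)\leq r-1$: condition~(2) then holds because, by minimality of $j''$, every node within distance $z$ of $j$ has completed set \emph{equal} to $\cC_j(t^*)$; condition~(3) is witnessed by $j^*$, using the \emph{outer} hypothesis at $\ell-1$ applied to $j''$ (not the inner hypothesis, as you suggest) to produce a message in $\cC_{j''}(t^*)\setminus\cC_{j^*}(t^*)$ and hence $|\cR_{j^*}(t^*)|\geq\ell$; and $z\leq r-1$ makes the time bound close. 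Only in the complementary case, where all of $N^r_G(j)$ agrees with $\cC_j(t^*)$, does the paper use the fast lemma, applied at time $t_{d-1,\ell}$ with the \emph{inner} hypothesis at $(d-1,\ell)$ supplying its second condition via the predecessor of $j$ at distance $d-1$ from $i_0$. So the two progress lemmas are fed by the two different induction hypotheses in a way your sketch has swapped, and without the closest-differing-neighbor selection your plan for the slow-lemma branch does not go through.
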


\begin{proof}
We prove both parts together by induction on $\ell$.
For the base, $\ell= 1$, both statements follow
immediately from Lemma \ref{lemma: ell=1}.
For the inductive step, let $\ell\geq 2$.
We assume the lemma
statement for $\ell-1$ (and for all $d$) and prove it for $\ell$.
To prove the lemma statement for $\ell$, we use a second, ``inner''induction, on the distance
$d$ from $i_0$ and the destination $j$.
For the base, $d = 0$, both statements
follow from Lemma \ref{lemma: d=0}.

\noindent {\bf Inductive Step: $d \geq 1$.}
For the inductive step, we assume $d \geq 1$.
Assume both parts of the lemma for
(1) $\ell -1$ (as ``outer'' induction hypothesis) for all distances;
and (2) for $\ell$ for distance $d-1$ (as ``inner'' induction hypothesis).
We prove both parts of the lemma for $\ell$ and distance $d$.

By our ``outer'' inductive hypothesis, all processors of the network satisfy the two parts of the lemma for $\ell-1$ and all values of $d$.
In particular, by combining the inductive hypothesis for $\ell-1$  and all values of $d$ with
Part \ref{item:lemma: t dl monotonically increasing} of Lemma \ref{lemma:bmmb: t d ell values} and
Part \ref{item:lemma: t<t' => C(t) <= C(t')} of Lemma \ref{lemma:bmmb: C(t') subset C(t'')},
it follows that for every node $j' \in N^r_G(j)$,
either
\begin{itemize}

\item [(S1)] $m\in \cC_{j'}(t_{d+r,\ell-1}+\fack) \mbox{ or } |\cC_{j'}(t_{d+r,\ell-1}+\fack)|\geq \ell-1.$

\end{itemize}
We use distance $d+r$ for $j'$ because $j'$ is at distance at most $d+r$ from $i_0$ in $G$
($j$ is at distance $d$ from $i_0$ in $G$; and $j'$ is either $j$ itself or it at distance at most $r$ from $j$ in $G$, since $j' \in N^r_G(j)$).
Let $t^*= t_{d+r,\ell-1}+\fack$.
Recall that,
by Part \ref{item:lemma com-bounds r fack} of Lemma \ref{lemma:bmmb: t d ell values},
\begin{equation}
t^* + (r-1) \fack \leq t_{d,\ell}~.
\label{ineq:lemma: t d+r ell-1 leq t d ell}
\end{equation}

\begin{enumerate}

\item We now prove Part 1 of the lemma (for $\ell$ and $d$).
Suppose that $m\in \cC_j(t^*)$  or $|\cC_{j}(t^*)| \geq \ell$.
Then, either
$m\in \cR_{j}(t_{d,\ell})$ or $|\cR_{j}(t_{d,\ell})| \geq \ell $,
since $\cC_{j}(t^*) \subseteq \cR_{j}(t_{d,\ell})$,
by Inequality (\ref{ineq:lemma: t d+r ell-1 leq t d ell}) and Part \ref{item:lemma: C_i(t) subseteq R_i(t)} of Lemma \ref{lemma:bmmb: C(t') subset C(t'')}
(with $t=t^*$ and $t'= t_{d,\ell}$).
This implies that $j$ satisfies Part 1 of the lemma statement for $\ell$. This implies Part 1.
Now, suppose the contrary, that $m\not\in \cC_j(t^*)$ and $|\cC_{j}(t^*)| < \ell$.
Since, $j$ does satisfy (S1) for $\ell-1$, it follows that $|\cC_{j}(t^*)| \geq \ell-1$,  since $m\not\in \cC_j(t^*)$.
Thus, in the remaining case, we have
\begin{equation}
m\not\in \cC_j(t^*) \mbox{ and } |\cC_{j}(t^*)| = \ell-1.
\label{ineq:lemma: m not in Cj(t*) and |cj(t*)|= ell-1}
\end{equation}
We next prove that $|\cR_{j}(t_{d,\ell})| \geq \ell$ (which implies Part 1 of the lemma).
We consider two cases regarding the set of messages that $j$ completes by time $t^*$
and the sets of messages that are completed (by that time) by all neighbors of $j$ in $G^r$.

\noindent {\bf Case 1:} There exists some neighbor $j'$ of $j$ in $G^r$ such that $\cC_{j'}(t^*) \not =  \cC_j(t^*)$.

Choose a closest node $j''$ to $j$ in $G$ with this property, and let $j^*$
be the next-closer node on some shortest path from $j''$ to $j$ in $G$.
That is,
$j'' \in \arg\min\{d_G(j,i') \mid \cC_{i'}(t^*)\not= \cC_{j}(t^*) \}$, $(j'',j^*)\in E$ and $d_G(j,j^*)=d_G(j,j'')-1$.
Recall that, in this case, there exists such a neighbor $j''\in N^r_G(j)$ with this property, hence $0\leq d_G(j,j^*)<d_G(j,j'')\leq r$.

To apply Lemma \ref{lemma: progress in z cdot Fack}, with $z = d_G(j,j^*)\leq r-1$ and $t = t^*$, we first need to show that the three hypothesis of the lemma hold.
First, by the second conjunct of (\ref{ineq:lemma: m not in Cj(t*) and |cj(t*)|= ell-1}),
we have $|\cC_{j}(t^*)| = \ell-1$, which implies the first hypothesis of Lemma
\ref{lemma: progress in z cdot Fack}.
Second,
we need to show that $\cC_{j}(t^*) \subseteq \cC_{j'}(t^*)$, for every $j' \in N_G^z(j)$.
This follows from the fact that $d_G(j,j'')=z+1$ and the fact that $j''$ is a closest node to $j$ in $G$ with the property that $\cC_{j''}(t^*) \neq \cC_{j}(t^*)$.
This implies that $\cC_{j}(t^*) = \cC_{j'}(t^*)$, and in particular $\cC_{j}(t^*) \subseteq \cC_{j'}(t^*)$, for every $j' \in N_G^z(j)$, as needed.
Third, we need to show that $|\cR_{j'}(t^*)|\geq \ell$ for some neighbor $j' \in N^z_G(j)$.
We show that $|\cR_{j^*}(t^*)|\geq \ell$ (that is, $j^*$, in particular, does satisfy this property).

The fact that $j''$ is a closest node with this property and node $j^*$ is closer than $j''$ to $j$ in $G$, implies that
$\cC_{j^*}(t^*) = \cC_j(t^*)$ and that $\cC_{j''}(t^*) \neq \cC_{j^*}(t^*)$.
By the inductive hypothesis for $\ell-1$, we obtain that either $m\in \cC_{j''}(t^*)$ or $|\cC_{j''}(t^*)| \geq \ell-1$;
either way, by Inequality
(\ref{ineq:lemma: m not in Cj(t*) and |cj(t*)|= ell-1}),
there is some message $m'\in \cC_{j''}(t^*) \setminus \cC_{j^*}(t^*)$,
which implies by
Part \ref{item:lemma: i and j neigh t'<=t C_i(t') subseteq R_j(t)}
of Lemma \ref{lemma:bmmb: C(t') subset C(t'')}
(with $t=t'=t^*$, $i'=j''$ and $i=j^*$),
that $m'\in \cR_{j''}(t^*) \setminus \cC_{j''}(t^*)$.
This, in turn implies that
$|\cR_{j^*}(t^*)| \geq \ell$.
Then Lemma \ref{lemma: progress in z cdot Fack}
(using $z = d_G(j,j^*)\leq r-1$ and $t = t^*$),
yields that $|\cR_j(t^* + (r-1) \fack)| \geq \ell$.
Since $t^*= t_{d+r,\ell-1}+ \fack$,
by Part \ref{item:lemma com-bounds r fack}
of Lemma \ref{lemma:bmmb: t d ell values},
we have that $t_{d,\ell} \geq t^* + (r-1) \fack$.
Thus, $|\cR_j(t_{d,\ell})| \geq \ell$ as needed for Part 1 of the lemma.

\noindent {\bf Case 2:} $\cC_{j'}(t^*) = \cC_j(t^*)$, for all neighbors $j'\in N^r_G(j)$.

Since $G' \subseteq G^r$
\footnote{
Note that this is the first place that we use this assumption.
}, it holds, in particular, that $\cC_{j'}(t^*) = \cC_{j}(t^*)$, for all neighbors $j'$ of $j$ in $G'$.
Let's focus on time $t_{d-1,\ell}$~.
By Part \ref{item:lemma com-bounds fack t d+r ell-1 + Fack leq t d-1 ell}
and Part \ref{item:lemma com-bounds fprog}
of Lemma \ref{lemma:bmmb: t d ell values},
we have
\begin{equation}
t^* \leq t_{d-1,\ell} + \fprog = t_{d,\ell} ~.
\label{ineq:lemma: t^* leq t d-1 ell + fprog = t d ell}
\end{equation}
Now, if $|\cR_{j}(t_{d-1,\ell})| \geq \ell$, then $|\cR_{j}(t_{d,\ell})| \geq \ell$, by
Part \ref{item:lemma: t dl monotonically increasing}
of Lemma \ref{lem:bmmb:time} and we are done.
So suppose that $|\cR_{j}(t_{d-1,\ell})| \leq \ell-1$.
Recall that
\begin{equation}
|\cC_{j}(t^*)|=\ell-1 \mbox{ and } \cC_{j}(t^*) \subseteq \cR_{j}(t_{d-1,\ell}),
\label{eq-subset:lemma: Cj(t*) with t d-1 ell}
\end{equation}
where the second inequality holds by combining
Inequality (\ref{ineq:lemma: t^* leq t d-1 ell + fprog = t d ell})
together with Part \ref{item:lemma: C_i(t) subseteq R_i(t)}
of Lemma \ref{lemma:bmmb: C(t') subset C(t'')}
(with $t=t^*$ and $t'=t_{d-1,\ell}$).
This implies that $|\cR_{j}(t_{d-1,\ell})| \geq \ell-1$.
Hence, $|\cR_{j}(t_{d-1,\ell})| = \ell-1$, which implies together with Inequality
(\ref{eq-subset:lemma: Cj(t*) with t d-1 ell})  that
\begin{equation}
\cR_{j}(t_{d-1,\ell})=\cC_{j}(t^*).
\label{eq:lemma: Rj t d-1 ell = Cj(t*)}
\end{equation}

Now we will apply Lemma \ref{lemma: z-neighborhood prog bound}, with $t=t_{d-1,\ell}$~.
To do this, we need to show the two hypotheses of that lemma:
First, we
need to show that $\cR_j(t_{d-1,\ell}) \subseteq  \cC_{j'}(t_{d-1,\ell})$ for every neighbor $j'$ of $j$ in $G'$.
Consider some neighbor $j'$ of $j$ in $G^r$.
We have
$$\cR_{j}(t_{d-1,\ell}) = \cC_{j'}(t^*) \subseteq \cC_{j'}(t_{d-1,\ell}), \mbox{ for every neighbor } j' \mbox{ of } j \mbox{ in } G^r,$$
where the first equality holds by the case analysis assumption and Equality (\ref{eq:lemma: Rj t d-1 ell = Cj(t*)});
and the second inequality holds by combining the first inequality of
(\ref{ineq:lemma: t^* leq t d-1 ell + fprog = t d ell})
with
Part \ref{item:lemma: t<t' => C(t) <= C(t')}
of Lemma \ref{lemma:bmmb: C(t') subset C(t'')}.
This implies, in particular, that
$\cR_{j}(t_{d-1,\ell}) \subseteq \cC_{j'}(t_{d-1,\ell})$, for every neighbor $j'$ of $j$ in $G'$ (since $G' \subseteq G^r$), as needed for the first hypothesis of Lemma \ref{lemma: z-neighborhood prog bound}.

To show the second hypothesis of Lemma \ref{lemma: z-neighborhood prog bound},
we need to show that
$\cR_{j'}(t_{d-1,\ell}) - \cR_{j}(t_{d-1,\ell})\not= \emptyset$, for some neighbor $j'$ of $j$ in $G$.
So, fix a neighbor $j^*$ of $j$ in $G$ at distance $d-1$ from $i_0$.
By the inductive hypothesis for $d$,
we obtain that either $m\in \cR_{j^*}(t_{d-1,\ell})$ or $|\cR_{j^*}(t_{d-1,\ell})| \geq \ell$;
either way,
by Inequality
(\ref{ineq:lemma: m not in Cj(t*) and |cj(t*)|= ell-1}),
there is some message $m'\in \cR_{j^*}(t_{d-1,\ell}) \setminus \cC_{j}(t^*)$.

Then, Lemma \ref{lemma: z-neighborhood prog bound}
yields that $|\cR_{j}(t_{d-1,\ell}+ \fprog)| > |\cR_{j}(t_{d-1,\ell})| = \ell-1$.
This implies that $|\cR_{j}(t_{d,\ell})| \geq \ell$, since $t_{d,\ell} = t_{d-1,\ell}+\fprog$
(by Part \ref{item:lemma com-bounds fprog}
of Lemma \ref{lemma:bmmb: t d ell values}).
Part 1 of the lemma follows.

\item Now, we prove Part 2 of the lemma.
Before proceeding recall that
$t^*=t_{d+r,\ell-1}+\fack \leq t_{d,\ell} < t_{d,\ell}+\fack$
(where the left inequality holds
by Part
\ref{item:lemma com-bounds fack t d+r ell-1 + Fack leq t d-1 ell}
of Lemma
\ref{lemma:bmmb: t d ell values}), which implies together with
Part \ref{item:lemma: t<t' => C(t) <= C(t')}
of Lemma \ref{lemma:bmmb: C(t') subset C(t'')}, that
$$
\cC_j(t^*) \subseteq \cC_j(t_{d,\ell}) \subseteq \cC_j(t_{d,\ell}+\fack).
$$
Now, suppose that $m\in \cC_j(t^*)$.
Then,
$m\in \cC_{j}(t_{d,\ell}+\fack)$
(since $\cC_{j}(t^*) \subseteq \cC_{j}(t_{d,\ell}+\fack)$) and we are done.
Next, assume that $m\not\in \cC_j(t^*)$.
Then,  by the inductive hypothesis for $\ell-1$, $|\cC_j(t^*)| \geq \ell-1$.
This implies, in particular, that
$|\cC_j(t_{d,\ell})| \geq \ell-1$
(since $\cC_j(t^*)\subseteq \cC_j(t_{d,\ell})$).
By Part 1, either $m\in \cR_j(t_{d,\ell})$ or $|\cR_j(t_{d,\ell})|\geq \ell$;
either way, we obtain that $|\cR_j(t_{d,\ell})|\geq \ell$, since, $\cC_j(t^*)\subseteq \cR_j(t_{d,\ell})$, $|\cC_j(t^*)|\geq \ell-1$ and $m\not\in \cC_j(t^*)$.
Then Corollary \ref{coro:bmmb: ack and prog progress} implies that
$|\cC_j(t_{d,\ell}+\fack)|>|\cC_j(t_{d,\ell})|$, so
$|\cC_j(t_{d,\ell}+\fack)|\geq \ell$ as needed.

\end{enumerate}
\end{proof}

\subsubsection{The Main Theorem}

Let $\mathcal{K}\subseteq \cM$ be the set of messages that arrive at the nodes in a given execution $\alpha$.

\begin{theorem}
\label{thm:bmmb:time-full}
If $|\mathcal{K}| \leq k$ then $\cR_j(t_1)=\mathcal{K}$ for every node $j$,
where $t_1 = (D + (r+1)k - 2) \fprog + r(k-1) \fack$.
\end{theorem}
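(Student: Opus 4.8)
The plan is to obtain the theorem as a corollary of the main complexity lemma (Lemma~\ref{lem:bmmb:time}), invoked once for each arriving message, combined with the monotonicity facts of Lemmas~\ref{lemma:bmmb: C(t') subset C(t'')} and~\ref{lemma:bmmb: t d ell values} and a short counting argument. Fix an arbitrary node $j$; I will establish $\cR_j(t_1)=\mathcal{K}$ by two inclusions. The inclusion $\cR_j(t_1)\subseteq\mathcal{K}$ is immediate from the cause function: every received message traces back, along a causal chain of $bcast$ and $rcv$ events, to an $arrive$ event, and since BMMB only $bcast$s messages it has itself acquired, every message appearing in any $\cR$ set lies in $\mathcal{K}$.

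For the reverse inclusion the key point is that Lemma~\ref{lem:bmmb:time} is valid for \emph{any} message arriving at time $0$ taken in the role of the distinguished message $m$; the choice of $(m,i_0)$ fixed in the preliminaries plays no special role in its proof. So, given any $m'\in\mathcal{K}$ with originator $i'$, I apply Part~1 of Lemma~\ref{lem:bmmb:time} with distinguished message $m'$, origin $i'$, distance $d'=d_G(i',j)$, and $\ell=k$. This gives that either $m'\in\cR_j(t_{d',k})$ or $|\cR_j(t_{d',k})|\ge k$. In the latter case, $\cR_j(t_{d',k})\subseteq\mathcal{K}$ together with $|\cR_j(t_{d',k})|\ge k\ge|\mathcal{K}|$ forces $\cR_j(t_{d',k})=\mathcal{K}$, so $m'\in\cR_j(t_{d',k})$ there as well. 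Hence $m'\in\cR_j(t_{d',k})$ in both cases.

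It remains to collapse these per-message deadlines to the single time $t_1$. Taking $D$ to be the diameter of $G$, we have $d'=d_G(i',j)\le D$, so Part~\ref{item:lemma: t dl monotonically increasing} of Lemma~\ref{lemma:bmmb: t d ell values} gives $t_{d',k}\le t_{D,k}$, and time-monotonicity of $\cR$ (Part~\ref{item:lemma: t<t' => C(t) <= C(t')} of Lemma~\ref{lemma:bmmb: C(t') subset C(t'')}) upgrades this to $m'\in\cR_j(t_{D,k})$. Letting $m'$ range over $\mathcal{K}$ yields $\mathcal{K}\subseteq\cR_j(t_{D,k})$, which with the first inclusion gives $\cR_j(t_{D,k})=\mathcal{K}$. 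Finally I substitute $d=D$ and $\ell=k$ into the closed form of $t_{d,\ell}$ to verify $t_{D,k}\le t_1$, and one more application of time-monotonicity yields $\cR_j(t_{D,k})\subseteq\cR_j(t_1)\subseteq\mathcal{K}$, hence $\cR_j(t_1)=\mathcal{K}$.

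I expect the only real obstacle to be bookkeeping rather than conceptual: checking $t_{D,k}\le t_1$ from the explicit definition of $t_{d,\ell}$ (and confirming that the uniform distance bound $d_G(i',j)\le D$ is exactly what $D$ is meant to capture). The genuinely load-bearing idea is the per-message invocation: a single application of Lemma~\ref{lem:bmmb:time} to the fixed $(m,i_0)$ would only certify delivery of $m$ (or of \emph{all} of $\mathcal{K}$ at those particular $j$ where the second disjunct fires), so one must apply it to every $m'\in\mathcal{K}$ separately and use the pigeonhole observation $|\cR_j|\ge k\ge|\mathcal{K}|$ together with $\cR_j\subseteq\mathcal{K}$ to turn the size disjunct into actual receipt of $m'$.
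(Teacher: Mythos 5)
Your proposal is correct and matches the paper's intent: the paper's own proof is literally the one-line ``Follows directly from Lemma~\ref{lemma:bmmblemma-full},'' and your per-message invocation of Part~1 with $\ell=k$, the pigeonhole step turning $|\cR_j|\geq k$ into membership via $\cR_j\subseteq\mathcal{K}$ and $|\mathcal{K}|\leq k$, and the monotonicity/arithmetic reduction to $t_{D,k}\leq t_1$ are exactly the details the paper leaves implicit. The only item you rightly flag as unfinished --- verifying $t_{D,k}\leq t_1$ from the closed form of $t_{d,\ell}$ --- is routine given that definition and does not affect the structure of the argument.
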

%
%
%
%
%
%
%
%
%
%
The conclusion of this theorem says all the messages of $\mathcal{K}$ are received at all
nodes by time $t_1$.
\begin{proof}
Follows directly from Lemma \ref{lemma:bmmblemma-full}.
%
%
%
%
%
\end{proof}

}
\subsection{Lower Bound for Grey Zone G'}
\label{sec:lower}

We are left with two questions concerning MMB. First, does BMMB perform as well
as the $r$-restricted case when we we consider other natural restrictions on $G'$, such as the grey zone constraint?
And second, if not BMMB, are there {\em any} MMB algorithms for the standard abstract MAC layer model
model that can perform well given the grey zone restriction, or perhaps even perform well for arbitrary $G'$?
Below, we answer both questions in the negative by proving that all MMB algorithms
require  $\Omega((D+k)\fack)$ time to solve MMB with a grey zone restricted $G'$.
This result establishes that our analysis of BMMB from Section~\ref{sec:bmmb:arbitrary} is tight,
and it opens an intriguing gap between
the superficially similar $r$-restricted and grey zone constraints.

\begin{theorem}
For any Multi-Message Broadcast algorithm ${\cal A}$, every $k>1$, and random bit assignment,
there exists a network, message assignment, and message scheduler such
that ${\cal A}$ requires $\Omega((D+k)\fack)$ time to solve the MMB problem.
\label{thm:lower}
\end{theorem}

To prove our main theorem, we handle the $k\fack$ and $D\fack$ terms separately. The $k\fack$ part is simple:
consider a node $u$ that represents the only bridge in $G$ between a receiver $v$ and the source(s) of $k$ messages.
Our message size limit restricts $u$ to send only a constant number of messages to $v$ at a time,
inducing the $\Omega(k\fack)$ bound.

\begin{lemma}
For every $k\in[1,n-2]$, algorithm ${\cal A}$, and random bit assignment,
there exists a network with $G'=G$, 
a message assignment that has no node begins with more than one message---what we call a {\em singleton assignment}---and a message scheduler such that ${\cal A}$
requires $\Omega(k\fack)$ time to solve the MMB problem.
\label{lem:lower:1}
\end{lemma}
\begin{proof}
We first fix our definition of $G'=G$.
To define $G$,
connect each node in $U = \{u_1,u_2,...,u_{k-1}\}$ to $u_k$, forming a star.
Then connect $u_k$ to some other node $v$.
Start each node in $U \cup \{u_k\}$ with a unique broadcast
message.
Consider a message schedule
that requires the full $\fack$ time between each broadcast and its corresponding acknowledgment.
We now bound the time for $v$ to receive all $k$ messages.
The key observation is that $u_k$ is a {choke-point} through which all messages
must pass to arrive at $v$.
To bound the time for messages to make it through this constriction,
 divide time into {\em rounds} of length $\fack$.
By our assumption on the scheduler, $u_k$ can begin the transmission of at most a constant number of message per round.
Therefore, $v$ can receive at most a constant number of new messages per round.
The $\Omega(k\fack)$ bound follows directly.
\end{proof}

The more interesting step is proving the $\Omega(D\fack)$ term. 
  %
  %
  %
 %
%
To accomplish this goal,
 we begin by defining the network used in our proof argument.
 Fix some diameter $D$ that is divisible by $2$ (the below proof can be easily modified to handle odd
 $D$).
 Fix two node sets  $U_A = \{a_1,a_2,...,a_D\}$ and $U_B = \{b_1,b_2,...,b_D\}$.
  Let $A$ and $B$ be the two line graphs that connect, in order of their indices, the nodes in $U_A$ and $U_B$, respectively.
 Let $C$ be the dual graph network over nodes $U_A \cup U_B$
 where: $G$ consists of the edges in $A$, $B$,
 and $G'$ is defined to include all the edges in $G$,
 as well as the following extra edges:
 for $i<D$, $a_i$ (resp. $b_i$) is connected to $b_{i+1}$ (resp. $a_{i+1}$).
Notice that our definition of $G'$ in $C$
satisfies the definition of grey zone restricted for a sufficiently large value for
the constant $c$ (see Section~\ref{sec:model}). An example of this network is shown in \Cref{fig:LB}.

\begin{figure*}[t]
	\centering
		\includegraphics[width=0.80\textwidth]{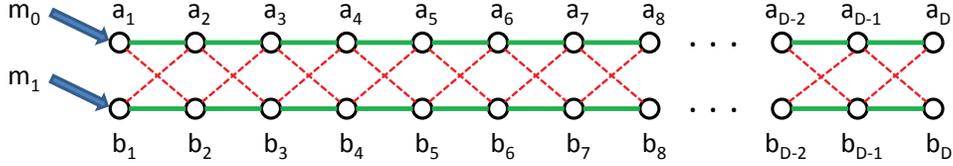}
	\caption{{\small The lower bound network. The green solid lines represent the reliable edges, i.e., those in $G$, and the red dashed lines represent the unreliable edges, i.e., those in $G'\setminus G$.}}
	\label{fig:LB}
\end{figure*}

 In the following, we define a {\em endpoint-oriented} execution
 to be an execution of an MMB broadcast algorithm for $k=2$, in $C$, for message set $M=\{m_0, m_1\}$,
 where $m_0$ starts at $a_1$ and $m_1$ starts at $b_1$.
 Given a finite endpoint-oriented execution $\alpha$,
 let $\ell_0(\alpha)$ be the largest node in $A$ (by increasing index order) that has received $m_0$ in $\alpha$,
 and $\ell_1(\alpha)$ be the largest node in $B$ that has received $m_1$.
 Let $q_0$ and $q_1$ be defined in the same way, except now capturing the largest node in the relevant line
  to have received {\em and initiated a broadcast}
 of the relevant message.
 Finally, we call an execution (or execution prefix/extension) {\em valid}, if the message
 events and their timing satisfy the model guarantees.

 The main insight in our proof argument is that given the right scheduling strategy,
  $m_0$'s progress down $A$ can slow $m_1$'s progress down $B$,
 as well as the other way around.
 This strategy, however, requires that nodes that receive these messages proceed to then broadcast them as well.
 The below lemma argues that when $m_0$ and $m_1$ make progress, either
 the next hops start broadcasting, or we can show that at least one of the messages
 is delayed long enough to establish our desired result.

\begin{lemma}
Let $\alpha$ be a finite endpoint-oriented execution of a MMB algorithm ${\cal A}$  with
 random bit assignment $\kappa$ in network $C$, such that
$q_0(\alpha) < \ell_0(\alpha)=a_i$ and $q_1(\alpha) < \ell_1(\alpha)=b_i$ for some $i\in \{1,...,D-1\}$.
There exists a message schedule that produces one of the following outcomes:
(1) an extension $\alpha'$ of $\alpha$ in which no time passes and
$q_0(\alpha') = \ell_0(\alpha')=a_{i}$ and $q_1(\alpha') = \ell_1(\alpha')=b_{i}$; or
(2) an extension $\alpha''$ of $\alpha$ of duration $\Omega(D\fack)$ in which the MMB problem is not yet solved.
\label{lem:lower:2}
\end{lemma}
\begin{proof}
Let $\alpha$ be the finite execution specified by the lemma statement.
By assumption $a_i$ has $m_0$ and $b_i$ has $m_1$ at the end of $\alpha$,
but neither node has yet initiated a broadcast of this message.
We begin by extending $\alpha$ using the following message schedule behavior:

{\em For every broadcast initiated by a node in $U_A \cup U_B \setminus \{a_i,b_i\}$,
or broadcast by $a_i$ (resp. $b_i$) but not containing $m_0$ (resp. $m_1$)}:
 deliver the message to the broadcaster's neighbors in $G$ (but to {\em no} $G'$-only neighbors)
 and then return an acknowledgment to the broadcaster, instantaneously (i.e., with no time passing).
In scheduling these events, construct the schedule to proceed in a round robin fashion through all nodes;
that is, for each node in this order, if there is a receive or acknowledgment event to schedule (as specified
in the above rule), schedule that event and allow the node to initiate its next broadcast (if its algorithm dictates),
then move on to the next node in the round robin order.

Call this extension $\beta$. Notice, $\beta$ is not necessarily a valid execution of our algorithm because
if $a_i$ or $b_i$ initiate a broadcast of $m_0$ and $m_1$, respectively, in $\beta$, they are starved by the schedule.
We now use $\beta$, however, to force our algorithm to satisfy one of the two lemma outcomes.
In more detail, let $s_a$ be the step in $\beta$ where $a_i$ initiates a broadcast containing $m_0$ (define $s_a=\bot$
if no such step exists). Define $s_b$ the same with respect to $b_i$ and $m_1$.
Because our schedule in $\beta$ never allows these broadcasts to complete, there can only be at most one such step for
$a_i$ and $b_i$ in $\beta$.
We consider three cases depending on the values of $s_a$ and $s_b$.

{\em Case $1$: Assume that $s_a \neq \bot$ and $s_b\neq \bot$.}
Let $\alpha'$ be the prefix of $\beta$ that stops at whichever of these two steps happens later in $\beta$.
Notice, $\alpha'$ provides a valid extension of $\alpha$: even though either $a_i$ or $b_i$ might have been
delayed from delivering a message in this extension, no time passed between $\alpha$ and the end of $\alpha'$,
so no timing guarantees were violated.
Accordingly, we see that $\alpha'$ satisfies outcome (1) of the lemma statement.

{\em Case $2$: $s_a = s_b = \bot$.}
In this case, $\beta$ does not starve any node: every initiated broadcast is delivered to $G$ neighbors and acknowledged.
Let $\alpha'$ be a transformation of $\beta$ where we: (1) allow $\fprog$ time to pass between each broadcast and its corresponding acknowledgment;
and (2) we stop after $D\fack$ time has passed since the end of $\alpha$.
 Because our algorithms are event-driven (and therefore have no concept of time),
 it is straightforward to see that $\alpha'$ is indistinguishable from $\beta$ for all nodes.
 We also node that the schedule in $\alpha'$ satisfies the necessary time constraints,
 as we never delay a pending delivery by more than $\fprog$ time.
It follows that $\alpha$ satisfies outcome (2) of the lemma statement.

{\em Case $3$: either $s_a=\bot$ or $s_b=\bot$, but not both.}
Assume, w.l.o.g., that $s_a=\bot$ (the other case is symmetric).
Let $\beta'$ be an extension of $\alpha$ defined with the same rules as $\beta$ with two exceptions:
(1) schedule $b_i$'s broadcasts of $m_1$ the same as all other broadcasts; and
(2) allow $\fprog$ time to pass between each broadcast and its corresponding acknowledgment.
Let $\alpha'$ be the prefix of $\beta'$ that ends after $D\fack$ time has passed since the end of $\alpha$.
As in the previous case, we note that $\alpha'$ is indistinguishable from $\beta$ with respect to
nodes in $A$ (the scheduling rules defined above for $\beta$ do not allow messages
from $B$ to be delivered to nodes in $A$, therefore nodes in $A$ cannot learn that, in $\beta'$,
$b_i$ can succeed in its broadcasts of $m_1$) and that it still satisfies the model's time bounds.
As a result, $a_i$ behaves the same in $\alpha'$ as in $\beta$ and does not broadcast $m_0$. Because
$a_i$, by assumption, is the furthest node down the line in $A$ to receive $m_0$ so far,
it follows that by the end of $\alpha'$ there are nodes in $A$ that have not yet received
$m_0$.
It follows that $\alpha'$ satisfies outcome (2).
%
%
%
%
\end{proof}

\noindent With the above lemma established, we
can use it to prove the main lemma regarding the necessity of $D\fack$ rounds.
Here is the main idea: As the messages arrive at each new hop in their respective lines,
we apply the above lemma to force these new hops to initiate broadcasts (or, directly prove our time bound by delaying too long).
Once we have established that the message frontiers on each line are broadcasting,
we can allow these broadcasts to mutually interfere over $G' \setminus G$ edges in such a way that satisfies
the progress bound while preventing useful dissemination. 

\begin{lemma}
For every algorithm ${\cal A}$ and random bit assignment $\kappa$,
there exists a message assignment and schedule such that ${\cal A}$ requires $\Omega(D\fack)$ time
to solve MMB in network $C$ for $k=2$.
\label{lem:lower:3}
\end{lemma}
\begin{proof}
We construct an endpoint oriented execution of ${\cal A}$ in $C$ with random bits $\kappa$,
by defining the message schedule behavior.
We start with $\alpha_0$: the finite execution that captures the behavior of the above
system only through $a_1$ receiving $m_0$ and $b_1$ receiving $m_1$.
These events happen at the beginning of the execution, so no time passes in $\alpha_0$.

Notice, $\alpha_0$ satisfies the preconditions required to apply Lemma~\ref{lem:lower:2}.
Apply this lemma to $\alpha_0$.
By the definition of this lemma, there are two possible outcomes.
If it is the second outcome, we have proved our theorem.
Assume, therefore, that the lemma produces an extension $\alpha_0'$ of $\alpha_0$ that
satisfies  the first outcome.
At the end of $\alpha_0'$,
we know that $a_1$ has initiated a broadcast of $m_0$ and $b_1$ has
initiated a broadcast of $m_1$,
and no time has passed since these broadcasts are initiated.
We further note that at this point, $m_0$ has made it no further down the $A$ line and $b_1$ has
made it no further down the $B$ line.

We now extend $\alpha_0'$ with a message schedule that delays $m_0$'s arrival
at $a_2$ and $m_1$'s arrival at $b_2$ by the maximum $\fack$ time.
To do so, partition an interval of $\fack$ time following $\alpha_0'$ into sub-intervals of length $\fprog$.
At the end of each sub-interval,
deliver $m_0$ from $a_1$ to $b_2$ (over a $G'$ edge)
and $m_1$ from $b_1$ to $a_2$ (also over a $G'$ edge).
At the end of this $\fack$ interval,
allow $m_0$ to make it to $a_2$ and $m_1$ to make it to $b_2$,
and acknowledge these broadcasts.
Notice, this schedule satisfies both the progress and acknowledgment
bounds for $a_1$ and $b_1$'s broadcasts during this interval.
During this $\fack$ interval, however, we must also schedule {\em other nodes'} broadcasts.
To do so, we use a simple rule: for every other broadcast,
allow the message to be delivered to all (and only) $G$ neighbors and be acknowledged
at the end of the next $\fprog$ interval.

Notice, our above delay strategy leads to a finite execution $\alpha_1$,
of duration $\fack$ longer than $\alpha_0$,
where $q_0(\alpha_1) \neq \ell_0(\alpha_1)=a_2$ and $q_1(\alpha_1) \neq \ell_1(\alpha_1)=b_2$.
We can, therefore, apply our above argument again, now replacing $\alpha_0$ with $\alpha_1$.
Indeed, we can keep applying this argument until either
we arrive at outcome (2) from Lemma~\ref{lem:lower:2},
or we build up to $\alpha_{D-2}$,
an execution of length $\Omega(D\fack)$ in
which $m_0$ and $m_1$ have not yet made it to the end of the $A$ and $B$ lines, respectively.
Either way, we have proved the theorem statement.
\end{proof}


\section{Multi-Message Broadcast with an Enhanced Abstract MAC Layer}\label{sec:FMMB}

In Section~\ref{sec:standard},
we proved that in the standard abstract MAC layer model, $\Omega(k\fack)$ time is necessary to solve MMB,
and for some definitions of $G'$, an additional
$\Omega(D\fack)$ time is also necessary.
Our analysis of BMMB then established that this algorithm is essentially the best you can do in this model.
In this section, we tackle the question of how much {\em additional power} we must add to our
model definition to enable faster solutions under the assumption that $\fprog \ll \fack$,
pointing to the extra assumptions of the enhanced abstract MAC layer model as one possible answer.
In particular,
we describe a new algorithm, which we call {\em Fast Multi-Message Broadcast} (FMMB),
that guarantees the following time complexity when run in the enhanced abstract MAC layer model with a grey zone restricted $G'$:

\begin{theorem}\label{thm:FMMB}
The FMMB algorithm solves the MMB problem in $O((D\log n + k\log{n} + \log^3{n})\fprog)$ time, w.h.p.,
in the enhanced abstract MAC layer model and grey zone restricted $G'$.
\label{thm:fmmb}
\end{theorem}

\noindent This result has no $\fack$ term. As the size of $\fprog$ decreases, this result's advantage over BMMB increases.


\bigskip 

\noindent {\bf Preliminaries.}
In the following, for $v\in V$, we use $ID(v)$ to refer to $v$'s unique id, $N_G(v)$ to describe {\em the ids} of $v$'s neighbors in $G$,
and $N_{G'}(v)$ to describe the ids of $v$'s neighbors in $G'$. We use ${\cal M}$ to refer to the set of messages to be disseminated in a given execution of MMB.
We call a set $S \subseteq V$ of nodes \emph{$G$-independent} if for each pair of nodes $u, v \in S$, we have $(v,u)\notin E$.
\fullOnly{In our analysis, we make frequent use of the following well-known fact, sometimes referred to as the \emph{Sphere Packing} Lemma.\footnote{Although the precise constants in the bound on the cardinality of $S$ in \Cref{lem:Sphere-Packing} are known, the asymptotic version stated above is sufficient for our purposes.}

\begin{lemma}\label{lem:Sphere-Packing}
Consider $P \subseteq \mathbb{R}^2$ such that $\forall p_1\neq p_2 \in P$, we have $1<\left\|p_1 - p_2\right\|_2\leq d$. Then $|P|=O(d^2)$.
\end{lemma}}

\subsection{Algorithm Outline}
The FMMB algorithm divides time into lock-step {\em rounds} each of length
$\fprog$. This can be achieved by leveraging the ability of a node to use time and {\em abort} a broadcast in progress in 
the enhanced abstract MAC layer. In more detail, when we say a node {\em broadcasts in round $t$}, we mean that it initiates the broadcast at the beginning
of the time slot dedicated to round $t$, and aborts it (if not completed yet) at the end of the time slot.

The FMMB algorithm uses three key subroutines which we summarize here,
but detail and analyze in the subsections that follow. All three subroutines are randomized and will be
shown to hold with sufficiently
high probability that their correctness guarantees can be combined with a union bound.
The FMMB algorithms begins by having nodes construct a maximal independent set (MIS) in $G$ using $O(\log^3{n})$ rounds.
We note that this MIS subroutine might be of independent interest.\footnote{The previously best known MIS solution for an abstract MAC layer model uses time that is linear in $n$~\cite{Lynch:2012}.}
The FMMB algorithm then uses a gather subroutine to gather the broadcast messages at nearby MIS nodes in an additional $O(k+\log{n})$ rounds.
Finally, it uses an overlay dissemination subroutine that broadcasts the messages to all MIS nodes, and then to their neighbors
(i.e., all nodes), in $O((D+k)\log{n})$ rounds.
The total combined running time of FMMB is therefore $O( D\log n + k\log n + \log^3 n)$ rounds, which
requires $O((D\log n + k\log{n} + \log^3{n})\fprog)$ total time.

\fullOnly{We continue by explaining each of the three subroutines. Theorem~\ref{thm:fmmb} follows directly Lemmata~\ref{lem:fmmb:mis}, \ref{lem:fmmb:gather}, and \ref{lem:fmmb:spread}. We also note that all three subroutines depend on the assumption of a grey zone restricted $G'$, which is leveraged in our analysis to enforce useful regionalization properties on the MIS nodes.}

\shortOnly{We continue by describing each of the three subroutines. Theorem~\ref{thm:fmmb} follows directly from the analysis of these subroutines.
Due to space constraints, we defer this analysis to the full version of this paper~\cite{full-version}.}

\subsection{The MIS Subroutine}\label{subsec:MIS}
We now describe an MIS subroutine that succeeds in building an MIS in $G$ in $O(c^4\log^3 n)$ rounds, w.h.p., where $c$ is the universal
constant from the grey zone definition (see Section~\ref{sec:model}). In more detail, the algorithm runs for a fixed length of time, $t_{MIS} \in O(c^4\log^3 n)$.
At the end of this period, some set $S\subseteq V$ of nodes {\em join} the MIS.
The algorithm guarantees, w.h.p.,\footnote{For this guarantee, as with the other subroutines we consider, we assume
that the high probability is of the form $1-n^{-c}$ for a sufficiently large constant $c$ to enable a union bound on the fail probability
for all three subroutines.}  that $S$ is a {\em maximal $G$-independent set}: (1) all pairs of nodes in $S$ are $G$-independent; and (2) every $u\in V$ is either
in $S$ or neighbors a node in $S$ in $G$.

 
The subroutine (called ``algorithm" from here forward)
works as follows: initially, all nodes are active. In the course of the algorithm, some nodes join the MIS and some nodes become inactive. The algorithm runs in $O(c^2\log^2 n)$ phases, each of which consists of $O(c^2\log n)$ rounds, which are divided into two parts: \emph{election} and  \emph{announcement}.

The election part has $4\log n$ rounds. At the start, each active node $v$ picks a random bit-string $b(v) \in \{0,1\}^{4\log n}$. In each round $\tau\in [1, 4\log n]$ of this part, each active node $v$ broadcasts its bit-string $b(v)$ iff the $\tau^{th}$ bit of $b(v)$ is $1$. If node $v$ did not broadcast but it received a message $b(u)$, be it from a $G$ or a $G'$ neighbor, then node $v$ becomes temporarily inactive for the rest of this phase. At the end of $4\log n$ rounds of the election part, if a node $v$ is still active, then $v$ joins the MIS set $S$.

The announcement part has $O(c^2 \log n)$ rounds. In each round, each node $v$ that joined the MIS in this phase broadcasts a message containing $ID(v)$ with probability $\Theta(1/c^2)$, and does not broadcast any message with probability $1-\Theta(1/c^2)$. If a node $u$ that has not joined the MIS receives a message $ID(v)$ from a $G$-neighbor, then $u$ knows that one of its $G$-neighbors is in the MIS and thus node $u$ becomes permanently inactive. At the end of the announcement part, each node that joined the MIS in this phase becomes permanently inactive, while each temporarily inactive node becomes active again.

\fullOnly{

\smallskip


\begin{lemma}\label{lem:independence}
The set $S$ of nodes that join the MIS nodes is $G$-independent {with high probability.}
\end{lemma}
\begin{proof}
We show that the probability that there are two $G$-neighbors $v, u \in S$ is at most $\frac{1}{n}$.

First suppose that there are two $G$-neighbors $v$ and $u$ that joined the MIS in the same phase. Then, it must hold that in that phase $b(v)=b(u)$. This is because, otherwise there would exist a round of the election part where one of the two nodes, say $v$, is not broadcasting but the other one, $u$, is broadcasting. In that case, $v$ would receive the message of a $G'$-neighbor---which might be $v$ or not---and thus become temporarily inactive which means that $v$ would not join the MIS in this phase. It is an easy calculation to see that the probability that $b(v)=b(u)$ is at most $\frac{1}{n^4}$ and a union bound over all choices of the pair $u,v$ establishes that the probability of existence of such a pair is at most $\frac{1}{n^2}$.

Now suppose that there were not two $G$-neighbors that joined the MIS in the same phase but  there were two nodes that joined the MIS in different phases. Let $t$ be the first phase in which there are two $G$-neighbors $v$, $u$ that are in the MIS. Without loss of generality, suppose that $v$ was not in the MIS at the end of phase $t$ and $u$ joined the MIS in phase $t' < t$. The set $S'$ of $G'$-neighbors of $v$ that joined the MIS in phase $t'$ is a $G$-independent set. Hence, using \Cref{lem:Sphere-Packing}, we get that $|S'|=O(c^2)$. Now in each round of the announcement part of $t'$, node $u$ broadcasts with probability $\Theta(1/c^2)$ and each other node in $S'$ does not broadcast with probability $\Theta(1/c^2)$. Hence, the probability that $v$ receives the message of $u$ in one round is at least $\Theta(1/c^2) (1-\Theta(1/c^2))^{|S'|-1} \geq \Theta(1/c^2) (1-\Theta(1/c^2))^{O(c^2)} = \Theta(1/c^2)$. Hence, during the $\Theta(c^2 \log n)$ rounds of the announcement part, $v$ receives the message of $u$ with probability at least $1-\frac{1}{n^4}$. Hence, the probability does not receive this message and later joins the MIS is at most $\frac{1}{n^4}$. Again, taking a union over all node pairs shows that the probability of existence of such a pair $u$, $v$ is at most $\frac{1}{n^2}$.

Overall, using another union bound over the two cases considered in the above two paragraphs, we get that the probability that there are two $G$-neighbors $v,u \in S$ is at most $\frac{1}{n}$.
\end{proof}


\begin{lemma} \label{lem:MISgrowth} For each phase $t$ and each node $v$. If at the start of phase $t$, node $v$ is active, then in phase $t$, at least one node $u$ such that $\left\|p(v)-p(u)\right\|_2 =O(c\log n)$ joins the MIS.
\end{lemma}
\begin{proof}
Consider a phase $t$ and a node $v$ that is active at the start of phase $t$. We use a node variable $u_{\tau}$, for round $\tau\in [1, 4\log n]$ of the election part, to keep track of a node that is still active. Initially, $u_1=v$. For each round $\tau$, if in round $\tau$, node $u_\tau$ broadcasts or it does not broadcast but it also does not receive a message, then let $u_{\tau+1}=u_{\tau}$. If $u{\tau}$ does not broadcast in round $c$ but it receives a message from a $G'$-neighbor $w$, then let $u_{\tau+1}=w$.
It follows from this recursive definition that $u_{4\log n}$ is active at the end of round $4\log n$ of the election part and hence, $u=u_{4\log n}$ joins the MIS. Furthermore, it is easy to see that $\left\|p(u_{\tau+1})-p(u_{\tau})\right\|_2 \leq \tau$ and hence, using the triangular inequality, we have $\left\|p(u)-p(v)\right\|_2 = \left\|p(u_{4\log n})-p(u_{1})\right\| \leq \sum_{\tau=1}^{4\log n-1} \left\|p(u_{\tau+1})-p(u_{\tau})\right\|_2 \leq 4c\log n$. This completes the proof.
\end{proof}



\begin{lemma}
The set $S$ of nodes that join the MIS is a maximal $G$-independent set {with high probability}.
\label{lem:fmmb:mis}
\end{lemma}
\begin{proof}
The proof requires us to establish two properties: (A) w.h.p., no two nodes $v, u\in S$ are $G$-neighbors, and (B) w.h.p., each node $v \in V\setminus S$ has a $G$-neighbor in $S$. Property (A) follows directly from \Cref{lem:independence}. We now prove property (B). Consider a node $v \in V$, suppose that $v$ does not join the MIS, and let $S'$ be the set of nodes within distance $O(c\log n)$ of $v$ that join the MIS. Using \Cref{lem:MISgrowth}, we know that for each phase $t$ in which $v$ starts as an active node, at least one new node joins $S'$. On the other hand, from \Cref{lem:independence}, we know that the set of nodes that join MIS and thus also $S'$ is w.h.p. a $G$-independent set. Hence, using \Cref{lem:Sphere-Packing}, we get that $|S'|=O(c^2\log^2 n)$. It follows that node $v$ cannot be active at the start of more than $O(c^2\log^2 n)$ phases, which means that there is a phase in which $v$ becomes permanently inactive. Recalling the description of the algorithm, we get that this means that in the announcement part of that phase, $v$ receives the message of a $G$-neighbor that has just joined the MIS. Hence, $v$ indeed has a $G$-neighbor in the MIS set $S$, which proves property $B$.
\end{proof}


}

\subsection{The Message Gathering Subroutine}\label{subsec:Gather-in-MIS}
We now describe a message gathering subroutine (called ``algorithm" in the rest of this subsection) that delivers each MMB message to a nearby MIS node in $O(c^2(k+\log n))$ rounds, w.h.p.
In more detail, each node $v$ maintains message-set $M_v\subseteq \mathcal{M}$ of messages that the node currently owns.
When this algorithm is first called, these sets describe the initial assignment of MMB message to nodes.
 Throughout the algorithm, the message-set of MIS nodes grow while the message set of non-MIS nodes shrink.
 The goal is to arrive at a configuration where $\cup_{v\in S} M_v = \mathcal{M}$: at which point, all messages in $\mathcal{M}$ are owned by MIS nodes. 
The algorithm is divided into $O(c^2 (k+\log n))$ {\em periods},
 where each period consists of three rounds.  At the start of each period, each MIS node decides to be active with probability $1/\Theta(c^2)$, and inactive otherwise. Then, in the first round of the period, each active MIS node broadcasts its ID, announcing that it is active. In the second round, each non-MIS node $v$ that received a message from one of its $G$-neighbors in the first round and has at least one message left in its message-set $M_v$ broadcasts one of the messages in $M_v$, along with its own ID. In the same round, if an MIS node $u$ receives a message $m$ from a $G$-neighbor, then node $u$ updates its message-set as $M_{u}=M_{u} \cup \{m\}$. In the third round of the period, each MIS node $u$ that received a message $m$ in the second round sends an \emph{acknowledgment} message, which contains message $m$ and its own $ID$. In this round, if a non-MIS node $v$ receives a message $m$ from a $G$-neighbor, then $v$ updates its message-set as $M_{v}=M_{v} \setminus \{m\}$.

\fullOnly{

\begin{lemma}
When the above algorithm is executed given a valid MIS $S$, the following holds at termination, w.h.p.: $\cup_{v\in S} M_v = \mathcal{M}$. That is, each message is owned by at least one MIS node.
\label{lem:fmmb:gather}
\end{lemma}
\begin{proof}
Consider a non-MIS node $v$ and suppose that at the start of the algorithm, node $v$ has message-set $M_v=T_0 \neq \emptyset$. We show that at the end of the algorithm, w.h.p., each message $m\in T_0$ is held by at least one MIS node $u$. Fix $u$ to be one (arbitrary) $G$-neighbor of $v$ that is in the MIS set. Let $A_{u} \subseteq \mathcal{M}$ be the set of messages for which $u$ has broadcast an acknowledgment and this acknowledgment is received by all $G$-neighbors of $u$. We prove that in each period in which $M_{v} \neq \emptyset$, with probability at least $1/\Theta(c^2)$, $|A_{u}|$ increases by one.

Let $S_u$ be the set of all MIS nodes that are within distance $2c$ of $u$. Using \Cref{lem:Sphere-Packing}, we know that $|S_u|=O(c^2)$. Therefore, for each period $t$, the probability that $u$ is the only MIS node in $S_u$ that is active in period $t$ is at least $1/\Theta(c^2) (1-1/\Theta(c^2))^{O(c^2)} = 1/\Theta(c^2)$. Suppose that $u$ is the only MIS node in $S_u$ that is active in period $t$. Furthermore, assume that $M_{v} \neq \emptyset$. Then, in the second round of period $t$, the only $G'$-neighbors of $u$ that are broadcasting are in fact $G$-neighbors of $u$. This is because, consider a node $w$ that is a $G'$-neighbor of $u$ but not a $G$-neighbor of $u$ and suppose that $w$ is broadcasting in the second round of period $t$. Then an MIS $G$-neighbor $w'\neq u$ of $w$ must be active in this period. It follows that $\left\|p(w')-p(u)\right\|_2\leq \left\|p(w')-p(w)\right\|_2+\left\|p(w)-p(u)\right\|_2 \leq 1+ c$. Thus, $w' \in S_u$ which is in contradiction with the assumption that $u$ is the only active node in $S_u$. Now, in period $t$, node $v$ broadcasts a message in $M_v$. Note that by the description of the algorithm $M_v \cap A_u = \emptyset$. Hence, we conclude that $u$ receives a message $m$ from one of its $G$-neighbors and this message is not in $A_u$. In the third round of this period, $u$ acknowledges this message $m$. We claim that this acknowledgment is received by all $G$-neighbors of $u$, which means that $|A_{u}|$ increases by one. The reason is that, if a $G$-neighbor $w$ of $u$ does not receive the acknowledgment, it means that a $G'$-neighbor $w'\neq u$ of $w$ was broadcasting in the third round. By the description of the algorithm, we get that $w'$ is an active MIS node, and furthermore, $\left\|p(w')-p(u)\right\|_2\leq \left\|p(w')-p(w)\right\|_2+\left\|p(w)-p(u)\right\|_2 \leq c+ 1$, which means that $w' \in S_u$, which is a contradiction to the assumption that $u$ is the only active MIS node in $S_u$ in period $t$. Hence, we have established that in each period in which $M_{v} \neq \emptyset$, with probability at least $1/\Theta(c^2)$, $|A_{u}|$ increases by one. Hence, in expectation, after $O(k c^2 )$ such periods, $|A_u|\geq k$. That is, the set $M_v$ is emptied which means that for each message $m$ that was originally in $M_v$, $v$ has received an acknowledgment and thus, the message $m$ is now held by at least one MIS nodes. A basic application of Chernoff bound then shows that after $O(c^2 (k+\log n)) = O(c^2 (k+\log n))$, w.h.p. we have $|A_u|\geq k$ and thus, each message $m$ intially held by $v$ is now held by at least one MIS nodes. Taking a union bound over all non-MIS nodes $v$ then completes the proof.
\end{proof}


}

\subsection{The Message Spreading Subroutine}\label{subsec:MIS-to-All}
We conclude by describing the subroutine (``algorithm" in the following subsection) used by FMMB to efficiently spread the messages
gathered at MIS nodes to the full network.
This algorithm spreads the messages
to all nodes in the network in $O((D+k)\log n)$ rounds, w.h.p.
In more detail, in the following, let $S$ be the set of MIS nodes when this algorithm is executed.
Assume $S$ is a valid MIS.
Let $E_S$ be the set of unordered pairs $(v, u) \in E$ such that the hop distance of $u$ and $v$ in graph $G$ is at most $3$. Consider the overlay graph $\mathcal{H}=(S, E_S)$. The algorithm works by spreading
messages over $\mathcal{H}$.
For this purpose, we explain a simple procedure, that uses $O(\log n)$ rounds, and that achieves the following: Suppose that each node $v\in S$ starts this procedure with at most one message $m_v$. Then, at the end of this procedure, w.h.p., we have that $m_v$ is delivered to all $\mathcal{H}$-neighbors of $v$. We will then establish the
final upper bound of $O((D+k)\log n)$ rounds by combining this procedure with a standard pipelining argument applied to messages in $\mathcal{H}$.

\smallskip
\noindent {\bf The Local Broadcast Procedure on the Overlay.}
The algorithm consists of $O(c^2\log n)$ periods, each consisting of three rounds. In each period, each node $v$ decides to be active with probability $1/\Theta(c^2)$ and remains inactive otherwise. If a node $v\in S$ is active, it broadcasts its message $m_v$ in the first round, if it has a message $m_v$. For all the three rounds of the period, if a node $u \in V$ receives a message from a $G$-neighbor in one round, it broadcasts this message in the next round. At the end of the three rounds of the period, each node $u\in S$ adds the messages that it has received to its message-set.

\fullOnly{ 

\begin{lemma}\label{lem:local-bcast}
At the end of the procedure, we have that for each node $v\in S$, if $v$ starts the procedure with message $m_v$, then $m_v$ is delivered to all $\mathcal{H}$-neighbors of $v$ with high probability.
\end{lemma}
\begin{proof}
Let $S_v$ be the set of nodes $u\in S$ such that $\left\|p(v)-p(u)\right\|_2 \leq 7c$. For now suppose that $v$ is the only node in $S_v$ that is active. We claim that in this case, in the $\tau^{th}$ round of the period---where $\tau\in \{1,2,3\}$, all nodes that their $G$-distance to $v$ is $\tau$ hops receive $m_v$. Hence, overall the three rounds, all $\mathcal{H}$-neighbors of $v$ receive $m_v$. The proof of this claim is as follows. First consider the case $\tau=1$. Then, if there is a $G$-neighbor $w$ of $v$ such that $w$ does not receive $m_v$ in the first round, it would mean that $w$ has a $G'$-neighbor $w'$ that is in $S$ and is active in this period. We have $\left\|p(w')-p(v)\right\|_2\leq \left\|p(w')-p(w)\right\|_2+\left\|p(w)-p(v)\right\|_2 \leq c+ 1$. Thus, $w$ is in $S_v$ which is in contradiction with the assumption that $v$ is the only node in $S_v$ that is active. Now we move to proving the claim for $\tau=2$ or $\tau=3$. Suppose that $\tau^*$ is the smallest $\tau \in \{2,3\}$ for which the claim breaks and there is a node $w$ that has $G$-distance of $\tau$ from $v$ but it does not receive $m_v$ in round $\tau$. We know that $w$ has a $G$-neighbor $w'$ that has $G$-distance $\tau^*-1$ from $v$ and $w'$ receives $m_v$ in round $\tau^*-1$. Hence, there must be a $G'$-neighbor $w''$ of $w$ that broadcasts a message $m'\neq m_v$ in round $\tau^*$. Given the description of the algorithm, it follows that there is an active node $u \in S$ which started message $m'$ in this period and $u$ is has $G$-distance at most $\tau^*$ from $w$. Thus, we get $\left\|p(v)-p(u)\right\|_2\leq \left\|p(v)-p(w'')\right\|_2+\left\|p(w'')-p(w)\right\|_2 + \left\|p(w)-p(v)\right\|_2 \leq \tau^*+ c+\tau^* \leq c+6\leq 7c$. This means that $w''$ is in $S_v$ which is in contradiction with the assumption that $v$ is the only node in $S_v$ that is active. This contradiction completes the proof of the claim, establishing that if $v$ is the only node in $S_v$ that is active, then $m_v$ is delivered to all $\mathcal{H}$-neighbors of $v$. Now note that using \Cref{lem:Sphere-Packing}, we get $|S_v|=O(c^2)$. Thus, in each period, the probability that $v$ is the only node in $S_v$ that is active is $1/\Theta(c^2) (1-1/\Theta(c^2))^{O(c^2)} = 1/\Theta(c^2)$. Hence, in $O(c^2\log n)$ periods, with high probability, there is at least one period in which $v$ is the only node in $S_v$ that is active. Therefore, with high probability, $m_v$ gets delivered to all $\mathcal{H}$-neighbors of $v$. Taking a union bound over all choices of node $v$ completes the proof.
\end{proof}

}

This local broadcast on the overlay provides essentially the same guarantee as given by $\fack$ on the full network topology, but with respect to the overlay graph $\mathcal{H}$. Having this {\em simulated broadcast},
the problem can be solved by combining BMMB with this simulated broadcast,
and then analyzing its performance with respect to $\mathcal{H}$.
That is, we divide the time into phases, each of length $O(\log n)$ rounds, where the constants are such that one run of the above procedure fits in one phase. Then, in each phase, each MIS node sends a message that it has not sent so far, to all of its $\mathcal{H}$-neighbors. It follows from \Cref{thm:bmmb:arbitrary} that after $O(D_{\mathcal{H}}+k)$ phases, all messages are broadcast over $\mathcal{H}$, i.e., to all MIS nodes. Here $D_{\mathcal{H}}$ is the hop diameter of the overlay graph $\mathcal{H}$, and we clearly have $D_{\mathcal{H}} \leq D_{G}=D$.
\shortOnly{Below is a  more detailed description of this part.}
\fullOnly{Below is a  more detailed description of this part, as well
as the final lemma statement for this subroutine.}

\smallskip
\noindent {\bf Broadcast on the Overlay Graph $\mathcal{H}$.} Here, we explain a more detailed version of the algorithm that broadcasts messages on the overlay graph $\mathcal{H}$, in $O((D+k)\log n)$ rounds. We divide the $O((D+k)\log n)$ rounds into $O(D+k)$ phases, each of length $O(\log n)$ rounds, where the constants are such that one run of the above procedure fits in one phase. In the algorithm, each node $v\in S$ has a message-set $M_v$ of messages that it has or it has received, and it also has a \emph{sent-set} $M'_v$ of messages that contains all the messages that $v$ has sent throughout this algorithm. Initially, for each node $v$, $M'_v=\emptyset$. In each phase, each node $v$ sets $m_v$ to be equal to one of the messages in $M_v\setminus M'_v$ and runs the procedure explained above. At the end of the phase, node $v$ adds $m_v$ to $M'_v$ and it also adds each message received during this phase to $M_v$. The following theorem shows that this algorithm broadcasts all messages to all MIS nodes.

\fullOnly{

\begin{lemma}\label{lem:fmmb:spread}
At the end of $D_{\mathcal{H}}+k$ phases, for each node $v \in S$, we have $M_v=\mathcal{M}$. Here $D_{\mathcal{H}}$ is the hop diameter of the overlay graph $\mathcal{H}$ and we clearly have $D_{\mathcal{H}} \leq D_{G}=D$
with high probability.
\end{lemma}

\begin{proof} 
Consider a message $m\in \mathcal{M}$ and let $S_m\subseteq S$ be the set of nodes $u\in S$ that hold $m$ at the start of the algorithm.
For each node $v\in S$, each $d\in [1, D_\mathcal{H}]$, and $\ell\in [1, k]$, set $t_{d_v, \ell}(v)=d_v+\ell$, where $d_v$ is the $\mathcal{H}$-distance of node $v$ to the set $S_m$, that is, the smallest $d$ such that there is a node $u\in S_m$ that is within $d$ $\mathcal{H}$-hops of $v$.

We claim that for each node $v$, after $t_{d, \ell}(v)$ phases, node $v$ has $m$ or $\ell$ other messages in its sent-set $M'_v$, w.h.p. It would then immediately follow that after $t_{D_\mathcal{H}, k}(v)$, node $v$ has $m$ in its sent-set $M'_v$ and hence also in $M_v$.

We prove the claim using an induction on $h=d_v+\ell$. The base case $h=0$ is straightforward as when $h=0$, we also have $d_v=0$ and in that case, the claim reduces to a trivial statement about the local queue of node $v$: Namely that if node $v$ has message $m$ in its local queue at the start of the algorithm, then after $\ell$ phases, $v$ has either message $m$ or $\ell$ other messages in its sent-set $M'_v$. For the inductive step, consider a node $v\in S$ such that $d_v+\ell=h$. If $d_v=0$, then the claim follows from the same trivial local-queue argument. Suppose that $d_v\geq 1$ and consider an $\mathcal{H}$-neighbor $u$ of $v$ such that $d_{u}=d_{v}-1$. By the induction hypothesis, we know that by the end of phase $h-1=d_v+\ell-1$, $v$ has either $m$ or at least $\ell-1$ other messages in $M'_v$, and $u$ has either $m$ or at least $\ell$ other messages in $M'_u$. For each of these four possibilities, we get that with high probability, by the end of phase $h=d_v+\ell$, node $v$ has either $m$ or at least $\ell$ other messages in $M'_v$. This is because of the following: if $v$ already has $m$ or at least $\ell$ other messages in $M'_v$ at the end of phase $h-1=$, then we are done. Otherwise, using \Cref{lem:local-bcast}, we get that w.h.p., by the end of phase $h-1$, node $v$ has received either $m$ or $\ell$ other messages from $u$ which shows that at the start of phase $h$, node $v$ has at least one message in $M_v\setminus M'_v$, either $m$ or a different message. Thus, at the end of phase $h$, either $m$ or $\ell$ other messages are in $M'_v$. This finishes the proof.
\end{proof}

}


\section{Conclusion}


In this paper, we applied the abstract MAC layer approach to a natural problem: disseminating an unknown
amount of information starting at unknown devices through an unknown network (what we call multi-message broadcast).
We proved that the presence of unreliable links has a significant but perhaps unexpected impact on the worst-case performance
of multi-message broadcast.
In particular, with no unreliability or unreliable links limited to nodes close in the reliable link graph,
basic flooding (what we called the BMMB algorithm) is efficient.
Once we shift to the similar constraint of unreliability limited to nodes
close in geographic distance, however, all solutions are inherently slow. This indicates
an interesting property of unreliability: the ability to unreliably connect nodes distant in the reliable link graph seems
to be what degrades worst-case performance of broadcast algorithms.
Finally, we demonstrated that if nodes have estimates of the model time bounds
and can abort messages in progress, even more efficient solutions to this problem are possible.
Most existing MAC layers do not offer an interface to abort messages. This result
motivates the implementation of this interface (which seems technically straightforward).

In terms of future work, 
there exist many other important problems for which a similar analysis can be performed,
such as leader election, consensus, and network structuring.
It would be interesting to investigate whether there are properties of link unreliability
that are universal to distributed computation in this setting, or if the properties of this type that matter differ
from problem to problem.
Another direction to study within this same general area is whether the strength of the scheduler
strongly impacts worst-case performance. In our lower bound, for example, the scheduler knows
the algorithm's random bits. This is a strong assumption and motivates the question of whether
this bound can be circumvented with a weaker adversary and a more clever algorithm.

\bibliographystyle{plain}
\bibliography{wireless}
\end{document}